\title{Interactive Exploration of the Temporal \texorpdfstring{$\boldsymbol{\alpha}$}{α}-Shape}
\author{Felix Weitbrecht}{Universit\"at Stuttgart, Germany}{weitbrecht@fmi.uni-stuttgart.de}{}{}
\authorrunning{F. Weitbrecht}
\keywords{Computational Geometry, Delaunay Triangulation, Incremental Construction, Spatio-Temporal Point Set, Alpha-Shape}
\begin{document}
	
	\maketitle

	\vspace{-.6em}
	\begin{abstract}
		Shape is a powerful tool to understand point sets. A formal notion of shape is given by $\alpha$-shapes, which generalize the convex hull and provide adjustable level of detail. Many real-world point sets have an inherent temporal property as natural processes often happen over time, like lightning strikes during thunderstorms or moving animal swarms. To explore such point sets, where each point is associated with one timestamp, interactive applications may utilize $\alpha$-shapes and allow the user to specify different time windows and $\alpha$-values. We show how to compute the temporal $\alpha$-shape~$\alpha_T$, a minimal description of all $\alpha$-shapes over all time windows, in output-sensitive linear time. We also give complexity bounds on $|\alpha_T|$. We use $\alpha_T$ to interactively visualize $\alpha$-shapes of user-specified time windows without having to constantly compute requested $\alpha$-shapes. Experimental results suggest that our approach outperforms an existing approach by a factor of at least $\sim$52 and that the description we compute has reasonable size in practice. The basis for our algorithm is an existing algorithm which computes all Delaunay triangles over all time windows using $\mathcal{O}(1)$ time per triangle. Our approach generalizes to higher dimensions with the same runtime for fixed $d$.\vspace{.3em}\setlength{\parskip}{-.6em}
	\end{abstract}

	\vspace{-.5em}
	\begin{figure}[h]
		\centering
		\includegraphics[width=.265\columnwidth]{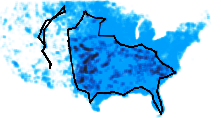}%
		\includegraphics[width=.265\columnwidth]{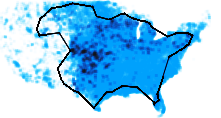}%
		\includegraphics[width=.265\columnwidth]{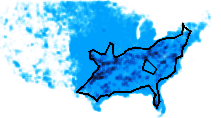}
		\includegraphics[width=.197\columnwidth]{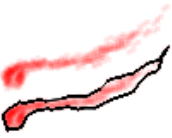}%
		\caption{\textbf{Left:} Storm events in the United States. The blue background is the entire point set, and for three time windows the points and $\alpha$-shapes (with different granularity) are overlaid in black. The $\alpha$-shapes are much smaller than the point sets. Left to right: 971 points vs.\ 80 edges, 917 points vs.\ 53 edges, and 761 points vs.\ 77 edges. \textbf{Right:} A particle swarm and an $\alpha$-shape representing its shape.}\label{fig:swarmAndStorm}
	\end{figure}

	\vspace{-.2em}
	\textbf{Corrigendum.} The algorithm output in this paper is not minimal as claimed. This error is corrected in~\cite{weitbrecht2026delaunay} with the algorithm in Section~5.2 and the discussion in Section~5.3.

	\vspace{-.2em}
	\section{Introduction}
		\vspace{-.3em}
		Visual analysis is a fundamental aspect of human-conducted examination of point data, in particular when there is a temporal aspect involved. We first take a closer look at two specific usecases to highlight the versatility of shape-based visualization.
		\begin{itemize}
			\item Visual analysis of storm events over time: Given a large collection of coordinate pairs which represent storm events, each observed or measured at one specific point in time, environmental scientists may be interested in aggregating storm events within various time windows to see how weather and climate patterns affect the frequency and spread of storm events on a larger time scale. This usecase is explored in~\cite{Haunert19}. For such purposes it is not necessary to display individual data points, instead the shape of points in given time windows provides a low complexity representation of the data which is both easier to parse for humans and more efficient to handle for computers, as seen in Figure~\ref{fig:swarmAndStorm}, left.
			\item Visual analysis of swarm movements: Many animal species exhibit swarm behavior, where animals travel as a large group in which each individual animal follows almost the same movement pattern. We may be given a large collection of 2D or 3D positions of animals moving as a swarm, with each data point being the position of some (unidentified) animal observed at one specific point in time. Zoologists or animal behavior scientists could use such data to study group dynamics or movement patterns. To gain an overview of the data it must be aggregated, for example using its shape, as in Figure~\ref{fig:swarmAndStorm}, right. By looking at the shape of data points within specific time windows, we could observe how the shape of the swarm changes as it advances, or how the swarm moves through particular areas of interest. An alternative approach is given by MotionRugs~\cite{8440823}, which visualize the individual velocities of a group of entities over time to provide a high-level overview of group dynamics.
		\end{itemize}

		A popular tool to examine point data is the Delaunay triangulation and its subcomplexes. They can be used to reconstruct geometric objects based on scan points sampled over time~\cite{aganj2007spatio,sussmuth2008reconstructing}, and for many other applications~\cite{edelsbrunner2011alpha} such as pattern recognition~\cite{vauhkonen2009identification}. In~particular, they can be used to represent the shape of a set of points, for example using facets of the convex hull. This approach can be generalized using $\alpha$-shapes~\cite{EKS83,edelsbrunner1994three}, which also allow concavities, disjointness and holes, making them a versatile tool for many usecases~\cite{edelsbrunner2011alpha}. While an edge is a facet of the convex hull iff the line through it induces a halfspace empty of points from the input, an edge is part of the $\alpha$-shape iff an empty ball of radius $\alpha$ passes through both vertices of that edge. This definition allows increasing the granularity of the resulting representation by choosing smaller values of $\alpha$. Real-world point sets often inherently have a temporal ordering to them, so it is also interesting to consider not just $\alpha$-shapes of the whole data set, but also $\alpha$-shapes of time windows within the data set. This naturally suggests the creation of interactive visualization applications which allow users to specify and adjust time windows and $\alpha$-values in order to explore a spatio-temporal data set.

		In our model, each data point is associated with exactly one timestamp, and timestamps cannot have multiple points associated with them. More complex scenarios, such as one animal/object being sampled multiple times, or multiple points appearing at the same time, can be represented using additional timestamps. One then only needs to adapt time windows to the modified timestamps. Our experiments suggest that the computational overhead due to these additional timestamps is tolerable in practice.

		\vspace{-.65em}
		\subparagraph{Related work.}
			Every $\alpha$-shape is a subcomplex of the Delaunay triangulation, so an easy way to compute $\alpha$-shapes is to first compute the Delaunay triangulation, and then pick out those edges which admit an empty $\alpha$-ball. The time required to compute the Delaunay triangulation makes this approach unsuitable for interactive applications, so it makes sense to precompute $\alpha$-shapes and query them based on user input. An approach like this is presented in~\cite{Haunert19}, where storm events in the United States, sampled between 1991 and 2000, are visualized using $\alpha$-shapes of different time windows and $\alpha$-values. All $\alpha$-shapes over all time windows are precomputed for a fixed value or range of $\alpha$. This is done by first computing for every point $p$ the set of potential neighbors $CPN(p)$, i.e.\ those points $q$ which are close enough that a ball of radius $\alpha$ could pass through $p$ and $q$, and then identifying those pairs $(p, q)$ which admit an \emph{empty} ball of radius $\alpha$. For larger values of $\alpha$, which lead to a less detailed representation of the shape, $CPN(p)$ grows larger, and most of the investigated pairs $(p, q)$ do not admit an empty $\alpha$-ball. Even for smaller $\alpha$-values many of the investigated pairs don't admit an empty $\alpha$-ball.
			
			When considering the whole Delaunay triangulation, a more efficient output-sensitive approach is given in~\cite{funke2020efficiently}, which shows how to compute the set $T$ of all Delaunay triangles over all time windows in time $\mathcal{O}(|T| \log n)$ using differential incremental constructions. This was later improved~\cite{weitbrecht2022linear} to $\mathcal{O}(|T|)$.

		\subparagraph{Contribution.}
			To avoid explicitly investigating pairs of points which don't admit an empty $\alpha$-ball, we make use of the Delaunay triangulation in an approach that also works in higher dimensions. Section~\ref{sec:alpha} studies how the temporal $\alpha$-shape, a minimal representation of all $\alpha$-shapes over all time windows and over all values of $\alpha$, is related to the Delaunay triangles of all time windows, and gives some upper bounds on its complexity. In Section~\ref{sec:algAlpha} we show how the temporal $\alpha$-shape can be computed based on the temporal Delaunay enumeration algorithm of \cite{funke2020efficiently,weitbrecht2022linear}. An implementation of both algorithms which works in arbitrary dimensions is available on GitHub~\cite{weitbrecht2023github}. Using this implementation for the preprocessing phase, we create an interactive visualization application for the temporal $\alpha$-shape in Section~\ref{sec:demo}. Experimental results in Section~\ref{sec:results} show that our approach compares favorably against an existing approach and suggest that precomputing the entire temporal $\alpha$-shape is practical for real-world applications. Section~\ref{sec:outlook} concludes with an outlook on future work.
			
			Our algorithm generalizes to arbitrary fixed dimension $d \geq 1$, but for ease of presentation we describe it in 2D.

	\section{Preliminaries}\label{sec:preliminaries}
		We are given a point sequence $P=\{p_1, p_2, \dots, p_n\} \subset \mathbb{R}^2$ in general position. Point indices give the temporal order and a point $p_i$ exists only at time $i$. The Delaunay triangulation~$DT(P)$ is that (unique) subdivision of the convex hull of $P$ which consists only of triangles whose open circumcircles contain no points from $P$ in their interior. $P_{i,j}$ is the time window $\{p_i, p_{i+1}, \dots, p_j\}$ and $T_{i,j}$ its Delaunay triangulation $DT(P_{i,j})$. If $e$ is an edge of a triangle~$t$, we call $t$ a \emph{coface} of $e$. $T$ is the set of all Delaunay triangles occurring over all time windows~$P_{i,j}$. With some abuse of notation: $T = \bigcup_{i<j} T_{i,j}$.

		An open disk of radius $\alpha > 0$ is an \emph{$\alpha$-ball}. If an $\alpha$-ball has both vertices of an edge $e$ on its boundary, we say it is an $\alpha$-ball of $e$. For a fixed value of $\alpha$, the $\alpha$-shape consists of exactly those edges $e \in \binom{P}{2}$ which have an $\alpha$-ball which contains no points from $P$.

		\subsection{Delaunay Triangulations and \texorpdfstring{$\boldsymbol{\alpha}$}{α}-shapes}\label{subsec:delaunayAlpha}
			By a sphere morphing argument, it can be seen that all $\alpha$-edges are also Delaunay edges. Additionally, every Delaunay edge $e$ is an $\alpha$-edge for a value range of $\alpha$ determined by $e$'s cofaces~\cite{EKS83}. Figure~\ref{fig:alphaDelaunay} shows the Delaunay triangulation and an $\alpha$-shape of a point set.

			\begin{figure}[h]
				\centering
				\includegraphics[width=.48\textwidth]{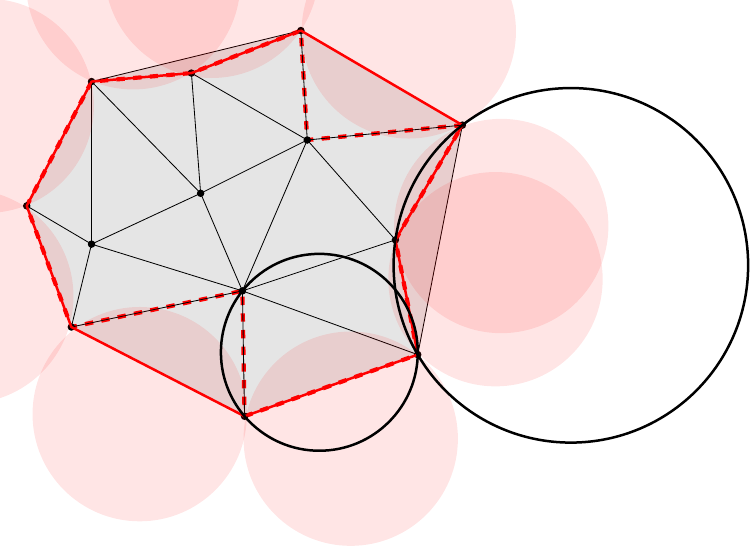}%
				\caption{A point set and its Delaunay triangulation in black, and an $\alpha$-shape shown with solid red edges. Some circumcircles of Delaunay triangles are shown with a black outline, and $\alpha$-balls are shown with a red overlay. As the $\alpha$-value becomes smaller, the resulting shape gains more detail over the convex hull. An even smaller $\alpha$-value might lead to the $\alpha$-shape shown with dashed red edges.}\label{fig:alphaDelaunay}
			\end{figure}
	
			One can consider the facets of the convex hull as degenerate Delaunay triangles. Their (degenerate) circumcircle is a halfspace and we define the corresponding circumradius to be $\infty$, and the circumcenter to be infinitely far away. This view will be convenient when deriving $\alpha$-edges from Delaunay triangulations, so we will use the terms \emph{triangle} and \emph{coface} to refer to triangles and facets alike. If $n=2$, we let the Delaunay triangulation be the edge between the two points.

		\subsection{Temporal Delaunay Triangle Enumeration}\label{subsec:delaunay}
			The \emph{hole triangulation} framework~\cite{funke2020efficiently} was introduced to compute the set $T$ of all Delaunay triangles over all time windows without asymptotic overhead for creation and deletion of triangles, taking overall time $\mathcal{O}(|T| \log n)$. This was later improved~\cite{weitbrecht2022linear} to $\mathcal{O}(|T|)$ using additional data structures which allow more efficient point location procedures. In this section we give a simplified overview of the resulting framework.
			
			The idea is to execute an incremental construction (\emph{IC}) of the Delaunay triangulation of every suffix of the point sequence: one IC for $P_{1,n}$, one IC for $P_{2,n}$, and so on. This process encounters the Delaunay triangulation of every time window as an intermediate state of one of these ICs, so it finds every Delaunay triangle of $T$. To avoid the overhead of creating the same triangle multiple times in different ICs, all ICs but the one for $P_{1,n}$ use a differential data structure: so-called \emph{hole triangulations} maintain only the difference between the ICs of two successive suffixes. The difference between the ICs of $P_{i-1,n}$ and $P_{i,n}$ is due to the removal of point $p_{i-1}$, so all Delaunay triangles which are not incident to $p_{i-1}$ in the IC of $P_{i-1,n}$ will also appear in the IC of $P_{i,n}$. The hole triangulation which is deployed for \linebreak the IC of $P_{i,n}$ thus only needs to triangulate the area covered by triangles incident to $p_{i-1}$ in the IC of $P_{i-1,n}$, as seen in Figure~\ref{fig:dataStructures}.
		
			\begin{figure}[h]
				\centering
				\includegraphics[width=.99\textwidth]{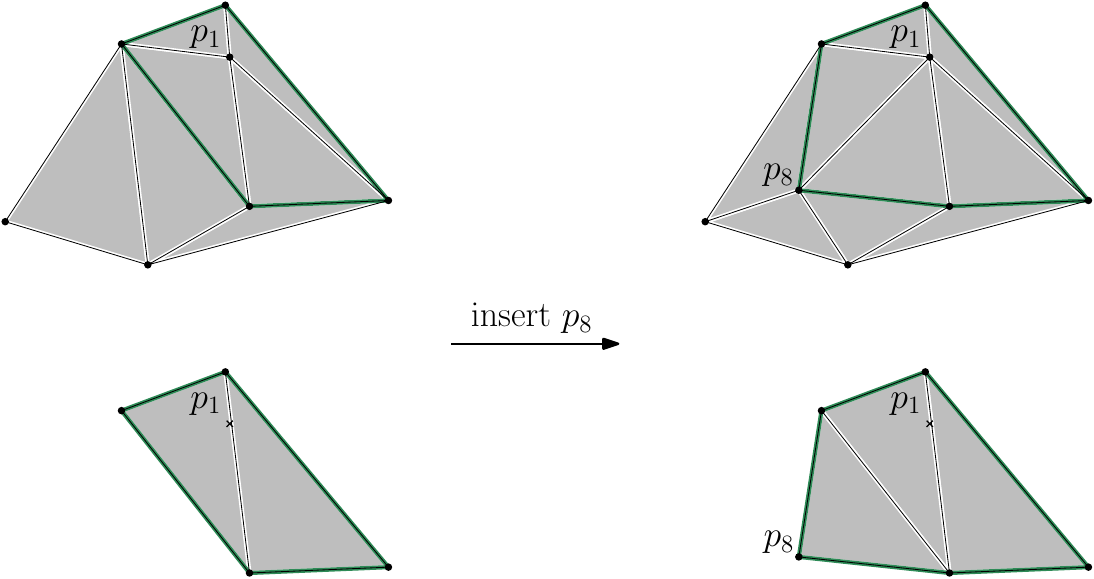}
				\caption{
					Data structures in the temporal Delaunay enumeration algorithm being updated as point~$p_8$ is inserted. \textbf{Top:} The full IC of $P_{1,n}$, $IC_{1,7}$ and $IC_{1,8}$. \textbf{Bottom:} The hole triangulation~$H_{2,7}$ being updated to $H_{2,8}$. The position of $p_1$ is shown only for reference. The link edges of $p_1$ are highlighted in green. Inserting $p_8$ in the full IC creates some new triangles, some of them incident to $p_1$. The link edges of $p_1$, which are the boundary of the hole triangulation of $P_{2,n}$, change, and along the changed link edges in $H_{2,8}$ new triangles are created which do not exist in $T_{1,8}$. Changes outside the link edges do not affect the hole triangulation.
				}\label{fig:dataStructures}
			\end{figure}
		
			We denote the intermediate states of the full incremental construction by $IC_{1,j}$ and those of hole triangulations by $H_{i,j}$. The first index is the start index of the respective time window and the second index represents the most recently inserted point. For visual aid, we arrange the intermediate states of these ICs into a matrix, using one row per IC:
			\begin{equation}
				\begin{matrix}
				\nonumber
					IC_{1,2} \rightarrow & IC_{1,3} \rightarrow & \dots \rightarrow & \dots \rightarrow & \dots \rightarrow & IC_{1,j} \rightarrow & \dots \rightarrow & \dots \rightarrow & IC_{1,n} \\
					& H_{2,3} \rightarrow & \dots \rightarrow & \dots \rightarrow & \dots \rightarrow & H_{2,j} \rightarrow & \dots \rightarrow & \dots \rightarrow & H_{2,n} \\
					& & & & & \vdots \\
					& & & H_{i,i+1} \rightarrow & \dots \rightarrow & H_{i,j} \rightarrow & \dots \rightarrow & \dots \rightarrow & H_{i,n} \\
					& & & & & \vdots \\
					& & & & & & & H_{n-2,n-1} \rightarrow & H_{n-2,n} \\
					& & & & & & & & H_{n-1,n}
				\end{matrix}
			\end{equation}
			This matrix is computed column by column, going top to bottom within each column. The hole triangulation $H_{i,j}$ answers the question ``what happens if we remove $p_{i-1}$ from $T_{i-1,j}$?'', so one could in theory reconstruct any $T_{i,j}$ by starting with $IC_{1,j}$ and successively applying the differences given by $H_{2,j}, H_{3,j}, \dots, H_{i,j}$, but this is not necessary in order to compute $T$.

			When a point is inserted into a Delaunay triangulation, it causes only local changes: triangles which contain that point in their circumcircle are destroyed, and they are replaced by the new triangles. For this reason, subsequent intermediate states of hole triangulations are often identical. In order to only compute updates to hole triangulations which actually cause structural changes we can use information from the rows above: The triangles in the hole triangulation $H_{i,j}$ all contain $p_{i-1}$ in their circumcircle because $H_{i,j}$ triangulates the hole created by the removal of $p_{i-1}$. So when two subsequent intermediate hole triangulation states $H_{i,j-1}$ and $H_{i,j}$ are not identical, there must be some new triangle $t$ appearing in~$H_{i,j}$. This triangle contains $p_{i-1}$ in its circumcircle, which means $p_{i-1}p_j$ is a Delaunay edge in $T_{i-1,j}$. While we may not have an explicit representation of $T_{i-1,j}$, all triangles and edges of $T_{i-1,j}$ are distributed across the data structures of $IC_{1,j}, H_{2,j}, H_{3,j}, \dots, H_{i-1,j}$. The computation order within each column is top-to-bottom, so if an update to $H_{i,j-1}$ is necessary, the edge~$p_{i-1}p_j$ must have been created in one of the updates ``above'' $H_{i,j}$, and we can use the creation of this edge to trigger an update from $H_{i,j-1}$ to $H_{i,j}$.
			
			The link edges of $p_{i-1}$ are the edges of its incident triangles which are themselves not incident to $p_{i-1}$. These link edges are also part of $H_{i,j-1}$, and they are the key to efficient point location in $H_{i,j-1}$. Simply put, once we know how the set of link edges changes, we can find the location of $p_j$ in $H_{i,j-1}$ by only traversing triangles which will be destroyed once~$p_j$ is inserted.
			
			The updates triggered by the creation of Delaunay edges form a directed acyclic graph (a \emph{DAG}) pointing downwards within each matrix column. The trick to accomplish point location in the full IC, i.e.\ the first row of the matrix, is to traverse this DAG in reverse order. Locating $p_j$ in $H_{j-2,j-1}$, where $p_j$ will definitely be inserted later, takes only $\mathcal{O}(1)$ time. Using pointers from link edges in hole triangulations to the triangles where they come from originally, we can find another hole triangulation (or the full IC) higher up in the matrix column where $p_j$ will also be inserted, by only visiting triangles which will be destroyed when $p_j$ is inserted later. This process is repeated until the first row of the matrix is reached, and its runtime is bounded by the time required to insert $p_j$ into the visited data structures, which is $\mathcal{O}(|T|)$ overall.
		
		\subsection{Rectangle Stabbing Queries}
			Given a set $R$ of axis-oriented boxes in $\mathbb{R}^d$, a rectangle stabbing query determines for a given query point $q \in \mathbb{R}^d$ the set of boxes $Q \subseteq R$ which contain $q$. Various flavors of this problem exist where the boxes are unbounded in one direction in all or some dimensions. Elaborate data structures exist which solve variants of this problem with good worst-case performance and linear space consumption, see for example~\cite{chan2022orthogonal} and related work therein.
			
			A more straightforward approach, which sacrifices worst-case complexity for practicality, is given by \emph{cs-box-trees}~\cite{Agarwal2002}: $d$-dimensional boxes are associated with $2d$-dimensional points by associating the 2 directions of each of the $d$ dimensions with 2 of the $2d$ dimensions. A~\emph{kd-tree} is built on these $2d$-dimensional points, and during its construction each inner node gets $2d$ \emph{priority leaf} nodes which store the boxes which have the most extreme values in both directions of the $d$ dimensions. Each node of the tree is associated with the bounding box of its children. Constructing a cs-box-tree is possible in $\mathcal{O}(n \log n)$ time, and query complexity is $\mathcal{O}(n^{1-1/d} + k \log n)$, where $k$ is the number of boxes reported in the query. Queries recursively visit child nodes with bounding boxes which contain the query point $q$ and report for all visited leaf nodes those boxes which contain $q$.

	\section{The Temporal \texorpdfstring{$\boldsymbol{\alpha}$}{α}-Shape}\label{sec:alpha}
		We want to compute a description of all $\alpha$-shapes over all time windows and over all $\alpha$-values, i.e.\ for each Delaunay edge occurring in $T$ we would like to know for which time windows and $\alpha$-values it is an $\alpha$-edge. Activity spaces, defined for fixed $\alpha$-value using only 2 temporal parameters in~\cite{Haunert19}, formalize this notion. We include $\alpha$ as a third parameter:
		
		\begin{definition}
			The set of triples $(i,j,\alpha) \subseteq \{1, 2, \dots, n\} \times \{1, 2, \dots, n\} \times \mathbb{R}^+$ which correspond to a time window $P_{i,j}$ and $\alpha$-value for which an edge $e$ is an $\alpha$-edge is called the $\alpha$-edge activity space $L^\alpha_e$.
		\end{definition}
		We will see in the following sections that these $L^\alpha_e$ can be represented efficiently in a compact manner by partitioning them into cuboids, each defined by 2 values for each of the 3~parameters. We can naturally define cardinality based on such partitions:
		
		\begin{definition}
			The cardinality $|L^\alpha_e|$ of an $\alpha$-edge activity space $L^\alpha_e$ is the minimum cardinality of a partition of $|L^\alpha_e|$ into cuboids.
		\end{definition}
		We can now define the temporal $\alpha$-shape and its cardinality:

		\begin{definition}
			The temporal $\alpha$-shape, denoted by $\alpha_T$, is the set of all $\alpha$-edge activity spaces.
		\end{definition}

		\begin{definition}
			The cardinality $|\alpha_T|$ of the temporal $\alpha$-shape $\alpha_T$ is the sum of the cardinalities of the $\alpha$-edge activity spaces contained in $\alpha_T$: $|\alpha_T| := \sum_{L^\alpha_e \in \alpha_T} |L^\alpha_e|$
		\end{definition}

		To get an intuition for activity spaces without the intricacies that come with $\alpha$-edges, let us first look at how they work on Delaunay triangles. Then, after also looking at activity spaces of Delaunay edges, we will take a closer look at $\alpha$-edge activity spaces. Note that even though $i$ and $j$ are discrete parameters, the figures in this paper will depict them as if they were continuous for ease of presentation.
		\begin{definition}
			For a Delaunay triangle $t$, the set of tuples $(i,j) \subseteq \{1, 2, \dots, n\} \times \{1, 2, \dots, n\}$ which correspond to time windows $P_{i,j}$ in which $t$ is a Delaunay triangle is called the Delaunay activity space $L^D_t$.
		\end{definition}
		
		A triangle $t \in T$ is a Delaunay triangle in a time window $P_{i,j}$ iff \emph{(a)} all vertices of $t$ have point indices between $i$ and $j$ and \emph{(b)} the open circumcircle $C$ of $t$ contains no points of $P_{i,j}$. Let $i_\mathrm{lower}$ and $i_\mathrm{upper}$ be the lowest and highest point indices among the vertices of $t$, then \emph{(a)}~is fulfilled iff $i \leq i_\mathrm{lower} \wedge i_\mathrm{upper} \leq j$.
		
		For \emph{(b)}, $C \cap P_{i_\mathrm{lower},i_\mathrm{upper}}$ is already empty because $t \in T$, so we consider the points inside~$C$ with a point index less than $i_\mathrm{lower}$, i.e.\ $C \cap P_{1,i_\mathrm{lower}-1}$. Let $i_\mathrm{before}$ be the highest point index among them, or $-\infty$ if no such points exist. Similarly, let $i_\mathrm{after}$ be the lowest point index in $C \cap P_{i_\mathrm{upper}+1,n}$, or $\infty$ if no such points exist. Now \emph{(b)} is fulfilled iff $i_\mathrm{before} < i \wedge j < i_\mathrm{after}$. These conditions for $i$ and $j$ are independent, so $L^D_t$ is a rectangle $]i_\mathrm{before},i_\mathrm{lower}] \times [i_\mathrm{upper},i_\mathrm{after}[$, as seen in Figure~\ref{fig:delaunayAS}. The first dimension gives the range of the lower end of time windows in which $t$ is a Delaunay triangle, and the second dimension that of the upper end. The Delaunay triangles of time window $P_{i,j}$ are exactly those whose activity space contains the point $(i,j)$.

		\begin{figure}[h]
			\centering
			\includegraphics[width=.8\textwidth]{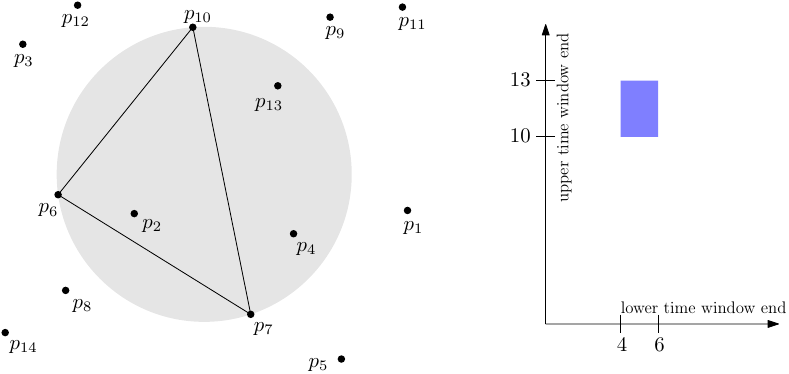}%
			\caption{\textbf{Left:} A Delaunay triangle $t \in T$ and some points of the whole point sequence. Some~of the points are inside the circumcircle of $t$. \textbf{Right:} $L^D_t$, the activity space of $t$, which is bounded by $i_\mathrm{lower}=6$, $i_\mathrm{upper}=10$, $i_\mathrm{before}=4$, and $i_\mathrm{after}=13$.}\label{fig:delaunayAS}
		\end{figure}

		All $\alpha$-edges are also Delaunay edges, and every Delaunay edge is an $\alpha$-edge for a certain range of $\alpha$~\cite{EKS83}, as indicated in Figure~\ref{fig:alphaDelaunay}. We explore this relationship using activity spaces in Sections~\ref{subsec:activitySpaceDelaunay} and~\ref{subsec:activitySpaceAlpha}. We then study the complexity of $\alpha_T$ in Section~\ref{sec:complexityAlpha} before Section~\ref{sec:algAlpha} describes how, given $T$, we can efficiently compute $\alpha_T$.

		\subsection{Activity Spaces of Delaunay Edges}\label{subsec:activitySpaceDelaunay}
			Activity spaces for Delaunay edges are defined analogously to those of Delaunay triangles:

			\begin{definition}
				For a Delaunay edge $e$, the set of tuples $(i,j) \subseteq \{1, 2, \dots, n\} \times \{1, 2, \dots, n\}$ which correspond to time windows $P_{i,j}$ in which $e$ is a Delaunay edge is called the Delaunay activity space $L^D_e$.
			\end{definition}

			\begin{figure}[h]
				\centering
				\includegraphics[width=\textwidth]{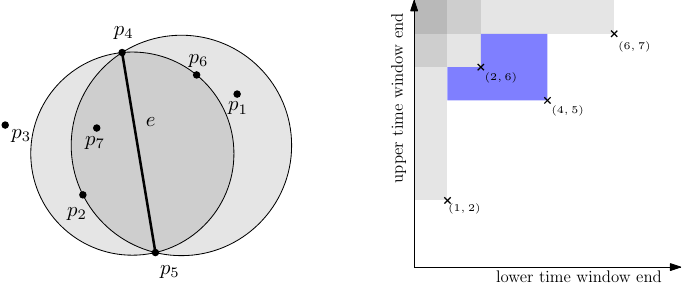}%
				\caption{\textbf{Left:} A Delaunay edge $e$ and some points. The pairs $\{p_1, p_2\}$, $\{p_2, p_6\}$, and $\{p_6, p_7\}$ prevent $e$ from being a Delaunay edge. More such pairs exist, but their influence on $L^D_e$ is dominated by that of the 3 given pairs. \textbf{Right:} Subtracting the rectangles corresponding to the 3 pairs (in transparent grey) from $]-\infty, 4] \times [5, \infty[$ yields the Delaunay activity space of $e$, $L^d_e$ (in blue).}\label{fig:delaunayFaceAS}
			\end{figure}

			For Delaunay triangles, the disk which needs to be empty is fixed, as it is simply the circumcircle. But for Delaunay edges there are infinitely many disks to consider. For this reason, Delaunay activity spaces of edges are not necessarily rectangular, instead they are shaped like a staircase as in Figure~\ref{fig:delaunayFaceAS}, right. Intuitively, the bottom right staircase corner~$(4, 5)$ represents the smallest time window enclosing both endpoints of the edge, and moving to the left or to the top from there means expanding the time window to include lower or higher indices. As the time window grows, points will appear closer and closer to the edge, so the circumcircles of cofaces of $e$ will be smaller and smaller. Eventually there may be points which are so close to $e$ that $e$ is no longer a Delaunay edge, like it is the case at $(2, 6)$. Such points create the steps of the staircase.
			
			More formally, Delaunay edge activity spaces have the staircase property:
			
			\begin{definition}\label{def:staircase}
				An activity space $L$ has the staircase property iff it can be obtained by taking a rectangle which is unbounded towards $-\infty$ in the dimension representing the lower time window end, and unbounded towards $\infty$ in the other dimension, and then subtracting from it a set of rectangles shaped in the same way. That is, we could take a rectangle $\left]-\infty,i_\mathrm{lower}\right] \times \left[i_\mathrm{upper},\infty\right[$ and subtract from it a set of rectangles like $\left]-\infty,i_1\right] \times \left[i_2,\infty\right[$.
			\end{definition}

			For a Delaunay edge $e$ occurring in $T$ to be a Delaunay edge in time window $P_{i,j}$, \emph{(a)}~applies like before, and for \emph{(b)}, we need to look at what forces every open disk passing through the vertices of $e$ to be non-empty. We can always find such an empty disk for convex hull edges, simply by choosing its center sufficiently far away. For all other edges, first observe that all disks which pass through all vertices of $e$ are centered on a common line. If we grow such a disk $S$ in one direction, i.e.\ move its center along this line, more and more points on that side of $e$ will be inside $S$ (and they will stay inside as we keep moving~$S$). If we now move $S$ the other way, eventually all these points will be outside of $S$ again, and the last point that was inside $S$, say $p_{i_\mathrm{last}}$, characterizes how far $S$ needs to be moved back to be empty of points on that side of $e$. If, at this moment, no points on the other side of $e$ are inside $S$, we have found a disk satisfying \emph{(b)}. Figure~\ref{fig:delaunayFaceAS}, left, shows some disks which are empty in some time windows.

			But if $S$ is not empty, no such disk can exist because moving $S$ in either direction only puts more points of that side inside $S$. In this case, we can choose any point $p_{i_\mathrm{other}}$ inside~$S$ to find a pair $\{p_{i_\mathrm{last}}, p_{i_\mathrm{other}}\}$ which certifies that $e$ is not a Delaunay edge, and we might find even more pairs by repeating the same procedure on the other side of $e$. Over all time windows, there may be many such pairs, and \emph{(b)} is fulfilled iff the time window $P_{i,j}$ contains no such pair. Subtracting the rectangles $]-\infty, i_\mathrm{last}] \times [i_\mathrm{other}, \infty[$ corresponding to these pairs (w.l.o.g.\ $i_\mathrm{last} < i_\mathrm{other}$) from $]-\infty, i_\mathrm{lower}] \times [i_\mathrm{upper}, \infty[$ thus yields the activity space $L^D_e$, which is staircase-shaped as seen in Figure~\ref{fig:delaunayFaceAS}.

		\subsection{Activity Spaces of \texorpdfstring{$\boldsymbol{\alpha}$}{α}-Edges}\label{subsec:activitySpaceAlpha}
			Note first that any Delaunay edge $e$ is always an $\alpha$-edge for certain values of $\alpha$, so if we project $L^\alpha_e$ down into the two temporal dimensions, it will be identical to $L^D_e$. But an $\alpha$-ball of $e$ may be centered on either side of $e$, and we will consider both sides separately. This gives us the option to work with oriented edges, and to determine which edges in the $\alpha$-shape admit two empty $\alpha$-balls, i.e.\ where the $\alpha$-shape has no area and instead is only one edge thick. We fix one side of $e$ as the front and define accordingly:
			
			\begin{definition}
				The $\alpha$-edge front activity space $L^{\alpha_\mathrm{front}}_e$ of an edge $e$ is that subset of $L^\alpha_e$ which corresponds to time windows and $\alpha$-values for which an empty $\alpha$-ball of $e$ is centered in front of $e$.
			\end{definition}
			The $\alpha$-edge activity space $L^\alpha_e$ then is the union of the front and back activity spaces.
			
			Consider the circumcircles $C_\mathrm{front}$ and $C_\mathrm{back}$ of the cofaces of $e$ in the Delaunay triangulation of some time window $P_{i,j}$, and their circumradii $r_\mathrm{front}$ and $r_\mathrm{back}$. By the empty circumcircle property of Delaunay triangles, an $\alpha$-ball of $e$ is empty iff it is centered between $C_\mathrm{front}$ and $C_\mathrm{back}$, see Figure~\ref{fig:alphaBallRange}. So the circumradii of $e$'s cofaces determine the $\alpha$-value range. If $C_\mathrm{front}$ is centered behind $e$, the $\alpha$-value range of the front is empty. Otherwise, the front circumradius upper bounds $\alpha$. The lower bound is given by the back circumradius if $C_\mathrm{back}$ is also centered in front of $e$, and otherwise by the radius of the smallest disk which has both vertices of $e$ on its boundary, say $r_\mathrm{min}$. Translating this condition into a value range for $\alpha$, we get that $r_\mathrm{back} \leq \alpha \leq r_\mathrm{front}$ in case $C_\mathrm{back}$ and $C_\mathrm{front}$ are centered in front of $e$, and $r_\mathrm{min} \leq \alpha \leq r_\mathrm{front}$ if only $C_\mathrm{front}$ is centered in front of $e$. If $C_\mathrm{front}$ is also centered behind $e$, no empty $\alpha$-ball may exist in front of $e$ and the value range is empty.

			\begin{figure}[h]
				\centering
				\includegraphics[width=.569\textwidth]{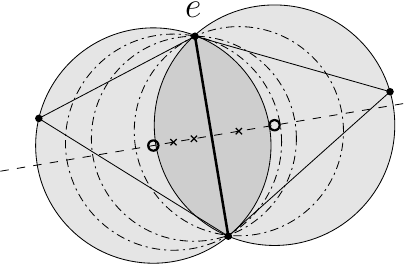}%
				\caption{The cofaces of Delaunay edge $e$ determining the possible centers of empty $\alpha$-balls of $e$. All disks through the vertices of $e$ are centered on the dashed line. The circumcircles of $e$'s cofaces are filled in transparent grey, and their centers are marked with circles on the dashed line. The crosses on the dashed line indicate possible centers of empty $\alpha$-balls of $e$, which are shown with a dash dotted outline.}\label{fig:alphaBallRange}
			\end{figure}

			If $e$ has the same cofaces in different time windows, the $\alpha$-value range is also the same, so we can characterize $L^{\alpha_\mathrm{front}}_e$ based on all cofaces of $e$ in $T$. For every coface pair $(t_\mathrm{front}, t_\mathrm{back})$, we get a (potentially empty) cuboid whose extent in the temporal dimensions is the intersection of $L^D_{t_\mathrm{front}}$ and $L^D_{t_\mathrm{back}}$. The $\alpha$-value range of this cuboid is determined as above. The resulting non-empty cuboids partition $L^{\alpha_\mathrm{front}}_e$. Unlike the partitions we compute in Section~\ref{sec:algAlpha}, this partition is not necessarily minimal.

			The requirement for the $\alpha$-ball to be centered in front of $e$ implies that, if we project $L^{\alpha_\mathrm{front}}_e$ down into the two temporal dimensions, it may be a proper subset of $L^D_e$. This is because any point $p_{i_\mathrm{close}}$ in front of $e$ which together with $e$ forms a triangle whose circumcenter is behind $e$ cuts the cuboid $\left]-\infty,i_\mathrm{close}\right] \times \left[i_\mathrm{close},\infty\right[ \times \left]-\infty,\infty\right[$ off from $L^{\alpha_\mathrm{front}}_e$. Still, the staircase property applies w.r.t.\ the temporal dimensions because such cutoffs only shorten the staircase in one of the temporal dimensions.

		\subsection{Complexity of the Temporal \texorpdfstring{$\boldsymbol{\alpha}$}{α}-Shape}\label{sec:complexityAlpha}
			We give some upper bounds on the complexity of $\alpha_T$. We first observe that $\mathcal{O}(|T|)$ does not necessarily bound $|\alpha_T|$, like in the following example. In the worst case, a single $\alpha$-edge activity space $L^\alpha_e$ may have to be partitioned into $\Omega(n^2)$ cuboids even if $|T| \in \mathcal{O}(n)$, as in Figure~\ref{fig:badAlphaActivitySpace}.
	
			\begin{figure}[h]
				\centering
				\includegraphics[width=.7\textwidth]{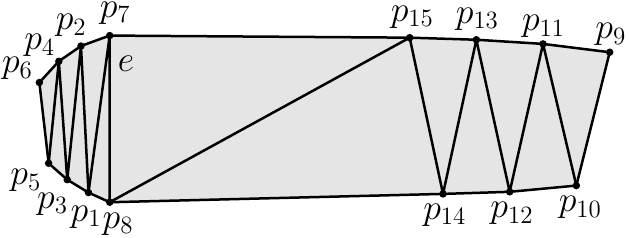}%
				\caption{A temporally ordered point set and its Delaunay triangulation where $|T| \in \Theta(n)$, but $\alpha_T \in \Omega(n^2) = \Omega(n \cdot |T|)$. The central edge $e = p_7p_8$ is a Delaunay edge in every time window that contains $p_7$ and $p_8$, so points $p_1, \dots, p_6$ can not form edges or triangles with $p_9, \dots, p_{15}$. Due to the correlation between the geometric and temporal ordering of the two groups of points on either side of $e$, only $\mathcal{O}(n)$ Delaunay triangles can exist over all time windows, hence $|T| \in \Theta(n)$. However, $|L^\alpha_e|$~has quadratic complexity for the following reason. Every time window $P_{i,j}$ with $i \leq 7$ has a different lower bound on the $\alpha$-value range of $e$'s $\alpha$-ball, because the closest point to $e$ on its left side comes closer and closer with decreasing $i$. Analogously, every time window $P_{i,j}$ with $j \geq 8$ has a different upper bound. Therefore every combination of $i \leq 7$ and $j \geq 8$ produces a non-empty cuboid with a distinct $\alpha$-value range. There are $\Omega(n^2)$ such combinations of $i$ and $j$, so $|L^\alpha_e| \in \Omega(n^2)$.}\label{fig:badAlphaActivitySpace}
			\end{figure}
			
			This example generalizes easily to arbitrary $n$ and $d$ (but not to arbitrary $|T|$), so we get the following lemma:
			\begin{lemma}\label{lem:alphaBlowup}
				There exist point sequences $\{p_1, p_2, \dots, p_n\} \subset \mathbb{R}^d$ with $|\alpha_T| \in \Omega(n \cdot |T|)$ for arbitrary $n$ and $d$.
			\end{lemma}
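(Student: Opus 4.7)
The plan is to scale the construction in Figure~\ref{fig:badAlphaActivitySpace} to arbitrary $n$ in $\mathbb{R}^2$ and then generalize it to $\mathbb{R}^d$. In 2D with general $n$, I would take a central edge $e = p_m p_{m+1}$ with $m = \lceil n/2 \rceil$ and distribute the remaining $n-2$ points into a left group and a right group. On each side, the points are placed along a gently convex curve so that the index ordering matches the ordering of distances to $e$ strictly monotonically, while keeping each side far enough from $e$'s supporting line that $e$ remains a Delaunay edge in every time window containing both its endpoints.

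Two bounds then need to be verified. First, $|L^\alpha_e| \in \Omega(n^2)$ by the argument sketched in the caption of Figure~\ref{fig:badAlphaActivitySpace}: each of the $\Omega(n^2)$ pairs $(i,j)$ with $i \leq m$ and $j \geq m+1$ produces a distinct combination of front and back circumradii for $e$, hence a distinct $\alpha$-range, and this forces any cuboid partition to have $\Omega(n^2)$ pieces. Second, $|T| \in \mathcal{O}(n)$: since $e$ separates the two groups in every relevant Delaunay triangulation, no triangle of $T$ crosses $e$, so $T$ decomposes into a left-side, a right-side, and an $e$-adjacent contribution. With a suitable monotone-convex placement on each side, only $\mathcal{O}(n)$ distinct Delaunay triangles can arise per side across all time windows.

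For $d \geq 3$, the construction lifts by embedding the planar configuration into a 2D affine subspace of $\mathbb{R}^d$ and perturbing the points generically in the remaining $d-2$ coordinates by an amount small enough to preserve the sign of every relevant in-sphere predicate. The edge $e$ remains Delaunay in every containing time window, the $\Omega(n^2)$ distinct $\alpha$-ranges persist, and the set $T$ of full-dimensional Delaunay simplices of the perturbed set has size $\mathcal{O}(n)$ because the thin-slab geometry forces every simplex to be supported on a nearby 2D Delaunay triangle. Thus $|\alpha_T| \in \Omega(n^2) = \Omega(n \cdot |T|)$.

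The main obstacle I expect is quantifying the last claim, i.e.\ showing that a sufficiently thin perturbation of a planar configuration yields only a linear number of full-dimensional Delaunay simplices in $\mathbb{R}^d$. An alternative that sidesteps this is an intrinsic $d$-dimensional construction, where the central edge $e$ sits between two clusters of $n/2 - 1$ points separated by a hyperplane, each cluster arranged so that its $i$-th point is the unique nearest to $e$ among the points of its cluster with index at least (resp.\ at most) $i$, while the two clusters together produce only $\mathcal{O}(n)$ Delaunay simplices over all time windows.
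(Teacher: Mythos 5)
Your overall strategy is the same as the paper's (the paper's proof is essentially the example of Figure~\ref{fig:badAlphaActivitySpace} plus the remark that it generalizes): a central edge $e$ whose two cofaces change with every extension of the time window, giving $\Omega(n^2)$ pairwise distinct $\alpha$-value ranges and hence $|L^\alpha_e| \in \Omega(n^2)$, combined with a placement of the two groups for which $|T| \in \mathcal{O}(n)$. The first half of your argument matches the paper. The problem is the second half: the concrete placement you choose, points of each group ``along a gently convex curve'' with index order matching the order along the curve, cannot satisfy $|T| \in \mathcal{O}(n)$. The within-group time windows alone already force $|T| \in \Omega(n^2)$ for any convex-position placement indexed along the arc: for every window $P_{a,c}$ inside one group, the chord $p_ap_c$ is a convex hull edge of that window (all intermediate points lie on the arc on one side of it), so it appears in $T$ as a hull facet (which the paper counts as a degenerate triangle), and even ignoring facets its unique coface in $P_{a,c}$ is a triangle whose extreme-index vertices are exactly $p_a$ and $p_c$, so distinct windows yield distinct triangles; that is $\binom{m}{2}$ triangles for a group of size $m \approx n/2$. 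Concretely, for points on a parabola the Delaunay triangulation of every contiguous window is the fan from one end, and these fans are almost entirely disjoint across windows. With $|T| \in \Theta(n^2)$ your construction only gives $|\alpha_T| \in \Omega(n^2) = \Omega(|T|)$, not $\Omega(n\cdot|T|)$. What the paper's example relies on (``correlation between the geometric and temporal ordering'') is a placement in which each group is \emph{not} in convex position, e.g.\ a chain of points zig-zagging along a ray towards $e$, so that in every window the Delaunay triangles are (reused) consecutive triples and only $\mathcal{O}(1)$ new simplices and hull facets appear per point; convexity of the carrier curve is precisely what destroys this reuse. A secondary point: for the lower bound of the $\alpha$-range to vary with $i$, the back coface's circumcenter must lie in front of $e$, i.e.\ the nearest group points must subtend an obtuse angle over $e$; this is in tension with your ``keep each side far from $e$'s supporting line'' justification of $e$'s Delaunayhood and needs a more careful balance (possible, but not with the argument as stated).

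On the extension to $d \geq 3$, you correctly identify the weak spot yourself: the claim that a thin generic perturbation of a planar configuration has only $\mathcal{O}(n)$ full-dimensional Delaunay simplices is not automatic (nearly-degenerate point sets in $\mathbb{R}^3$ are a classical source of superlinear Delaunay complexity), so as written this step is a conjecture, not a proof. The paper itself only asserts that the example ``generalizes easily'' to arbitrary $d$ without giving the construction, so you are not behind the paper here; but the 2D placement issue above is a genuine error in your proposal and must be fixed before either version of the higher-dimensional lift can be attempted.
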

			Luckily we can get a better bound on $|\alpha_T|$ than the trivial $\mathcal{O}(n^2 \cdot |T|)$ with a more global view:

			\begin{theorem}\label{thm:alphaComplexity2}
				$|\alpha_T| \in \mathcal{O}(n \cdot |T|)$, and this bound is tight in the worst case.
			\end{theorem}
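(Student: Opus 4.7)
The plan is to combine the per-edge cuboid partition made available in Section~\ref{subsec:activitySpaceAlpha} with a global double-counting argument over coface incidences. For each edge $e$ appearing as a Delaunay edge in some time window, let $f_e$ and $b_e$ denote the number of distinct cofaces of $e$ in $T$ on its front and back side, respectively. Section~\ref{subsec:activitySpaceAlpha} constructs a partition of $L^{\alpha_\mathrm{front}}_e$ into at most $f_e \cdot b_e$ cuboids, one per pair $(t_\mathrm{front}, t_\mathrm{back})$ of a front and a back coface; the analogous bound holds for $L^{\alpha_\mathrm{back}}_e$. Since each time window $P_{i,j}$ uniquely determines the two cofaces of $e$, these cuboids are pairwise disjoint and do partition the activity spaces, so (although the partition need not be minimum) $|L^\alpha_e| \leq 2 f_e b_e$.

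The second step is the summation. Using the trivial bound $\min(f_e, b_e) \leq n$, I get $f_e b_e \leq n \cdot (f_e + b_e)$, hence $|\alpha_T| \leq 2 \sum_e f_e b_e \leq 2n \sum_e (f_e + b_e)$. Each triangle of $T$ is a coface of exactly its three edges, and degenerate convex-hull cofaces contribute only $O(n)$ in total, so $\sum_e (f_e + b_e) = O(|T|)$. This yields $|\alpha_T| \in O(n \cdot |T|)$. Tightness is immediate from Lemma~\ref{lem:alphaBlowup}, which exhibits point sequences attaining $|\alpha_T| \in \Omega(n \cdot |T|)$.

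The main obstacle is choosing the right inequality to feed into the sum: the symmetric estimate $f_e + b_e \leq n - 2$ alone only gives $f_e b_e \leq (n/2)^2$ and therefore a loose $O(n^2 |T|)$ bound. The crucial observation is to apply $\min(f_e, b_e) \leq n$ to the \emph{product} $f_e b_e$, converting it into an $n$-factor multiplying the \emph{sum} $f_e + b_e$, so that the coface-incidence counting $\sum_e (f_e + b_e) = O(|T|)$ can absorb the sum. Aside from this, the only subtlety is the disjointness check for the cuboid partition of $L^{\alpha_\mathrm{front}}_e$, which follows immediately because each $(i,j)$ time window pins down a unique (front coface, back coface) pair.
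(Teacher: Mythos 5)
Your argument is correct and is essentially the paper's own proof: both bound $|L^\alpha_e|$ by the number of coface pairs of $e$ (the paper's partition $\mathcal{P}_e$, one cuboid per pair), bound that count by $n$ times the number of cofaces of $e$, charge cofaces to the at most three edges of each element of $T$ to absorb the sum into $\mathcal{O}(|T|)$, and invoke Lemma~\ref{lem:alphaBlowup} for tightness. The only cosmetic difference is your front/back split, where the two per-pair cuboids can overlap, so strictly you should merge (or subtract) them to obtain a genuine partition of $L^\alpha_e$ as required by the definition of $|L^\alpha_e|$ --- which is exactly what the paper's single cuboid per coface pair does.
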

			\begin{proof}
				Consider a Delaunay edge $e$ and its $\alpha$-edge activity space $L^\alpha_e$. The $\alpha$-value range of~$e$ in a given time window is completely determined by its two Delaunay cofaces. So we can get a partition $\mathcal{P}_e$ of $L^\alpha_e$ into cuboids by grouping together time windows in which $e$ has the same cofaces. That is, for every two cofaces $t_1, t_2 \in T$ of $e$, $\mathcal{P}_e$ contains a cuboid whose extent in the temporal dimensions is the intersection of $L^D_{t_1}$ and $L^D_{t_2}$, unless that intersection is empty. $|L^\alpha_e|$ is the minimum cardinality of a partition of $L^\alpha_e$ into cuboids, so we have $|L^\alpha_e| \leq |\mathcal{P}_e|$ and we can bound $|\alpha_T| = \sum_{L^\alpha_e \in \alpha_T} |L^\alpha_e| \leq \sum_{L^\alpha_e \in \alpha_T} |\mathcal{P}_e|$.
	
				There exist at most $n$ cofaces of $e$ in $T$. Any one coface can contribute at most $n - 1$ cuboids to $\mathcal{P}_e$ (one potential cuboid with each of the other cofaces of $e$). The $|T|$ Delaunay triangles each have at most 3 edges\footnote{Facets of the convex hull, which we consider as degenerate Delaunay triangles, only have a single edge.}, so we can bound the summed partition cardinalities as follows: $\sum_{L^\alpha_e \in \alpha_T} |\mathcal{P}_e| \leq \sum_{t \in T} 3 \cdot (n - 1) \in \mathcal{O}(n \cdot |T|)$. By Lemma~\ref{lem:alphaBlowup}, this bound is tight in the worst case.
			\end{proof}
			
			To give some further context, $|T|$ can be bounded by $\mathcal{O}(n^{1+\lceil d/2 \rceil})$ in the general case, and in the expected case by $\mathcal{O}(n \cdot \mathrm{polylog}(n))$ if the points have certain properties~\cite{weitbrecht2023complexityeurocg}. With Theorem~\ref{thm:alphaComplexity2} we get for 2D that $|\alpha_T| \in \mathcal{O}(n^3)$, which may still be improved using a more global analysis. The bound of Theorem~\ref{thm:alphaComplexity2} may be tight in terms of $|T|$, but better bounds in terms of $n$ are likely. In particular, the data sets examined in Section~\ref{sec:results} produce $|\alpha_T| < 3 \cdot |T|$.

	\section{Computing the Temporal \texorpdfstring{$\boldsymbol{\alpha}$}{α}-Shape}\label{sec:algAlpha}
		The temporal $\alpha$-shape $\alpha_T$ is the set of $\alpha$-shape activity spaces $L^\alpha_e$ of every Delaunay edge occurring in $T$. Our approach to compute these activity spaces based on $T$, as computed by the Delaunay enumeration algorithm described in Section~\ref{subsec:delaunay}, is as follows.
		
		The Delaunay activity spaces $L^D_t$ of the front cofaces of $e$ partition the staircase-shaped Delaunay activity space $L^D_e$ into rectangles, and the same is true for the back cofaces. Figure~\ref{fig:rectLists} shows such front and back partitions on the left. We overlay these partitions to get a partition into smaller rectangles. Together with their $\alpha$-value ranges which can be computed based on the corresponding cofaces as described in Section~\ref{subsec:activitySpaceAlpha}, these rectangles produce the cuboids which partition the $\alpha$-edge activity space $L^\alpha_e$. We do this separately for both sides of edges. We fix one side of some Delaunay edge $e$ as the front side and describe how to compute a minimal representation of $L^{\alpha_\mathrm{front}}_e$. Repeating the process for the back side of $e$ and for all other edges produces $\alpha_T$.
	
		We begin by listing Delaunay triangles with their edges in a useful order in Section~\ref{subsec:algOrdering}. In all subsequent sections we consider each edge, and each side of each edge, separately. We compute the rectangular activity spaces $L^D_t$ of each triangle $t$ and order these rectangles into four lists in Section~\ref{subsec:algRelisting}. We then simplify these lists to omit rectangles which do not influence $L^{\alpha_\mathrm{front}}_e$ in Section~\ref{subsec:algCleaning}. After assigning some pointers between the lists in Section~\ref{subsec:algLinking}, we can intersect the activity spaces of the cofaces of $e$ to compute $L^{\alpha_\mathrm{front}}_e$ in Section~\ref{subsec:algIntersecting}. The final result will be a set of cuboids for both sides of every Delaunay edge $e$ occurring in $T$.
		
		After the first, global, step in Section~\ref{subsec:algOrdering}, our algorithm works on each edge independently, so the steps of Sections~\ref{subsec:algRelisting} through~\ref{subsec:algIntersecting} are easily parallelizable.

		\subsection{Ordering Delaunay Triangles}\label{subsec:algOrdering}
			Consider some Delaunay edge $e$ occurring in $T$ and fix one side of it as the front side. Remember that $L^D_e$ is staircase-shaped, and that activity spaces of Delaunay triangles are rectangular. Since $e$ has exactly one front coface in every time window it is a Delaunay edge in, this staircase can be partitioned into the rectangular activity spaces of the front cofaces of $e$ over all time windows. We would like a nicely ordered representation of this partition.
			
			For this purpose, we will slightly modify the Delaunay enumeration algorithm of~\cite{funke2020efficiently,weitbrecht2022linear} which effectively computes an IC (\emph{incremental construction}) of each suffix $P_{i,n}$ of $P$. Separate instances of the same Delaunay edge may be created in multiple ICs, and different cofaces are computed with each instance. We modify the algorithm as follows:

			\begin{itemize}
				\item We make every IC maintain the set of all Delaunay edges ever created in it.
				\item We make both sides of every instance $e^*$ of an edge $e$ maintain a list of cofaces that were created on that side with that instance: $cl^\mathrm{front}_{e^*}$ and $cl^\mathrm{back}_{e^*}$. These lists are ordered by time of coface creation, i.e.\ by the lower boundary in the second dimension of activity spaces.
			\end{itemize}

			We can now represent the partition of the front activity space of $e$ by listing the $cl^\mathrm{front}_{e^*}$, which we get from the ICs in which an instance of $e$ was created, in the order of the ICs. Figure~\ref{fig:columnLists} shows a few examples of such representations. We can compute this representation for both sides of all Delaunay edges in one pass by creating an empty list of coface lists for both sides of each edge, $cll^\mathrm{front}_e$ and $cll^\mathrm{back}_e$, and then iterating over the ICs. For each IC we append each computed Delaunay edge instance's coface lists to the corresponding lists of coface lists: $cl^\mathrm{front}_{e^*}$ to $cll^\mathrm{front}_e$ and $cl^\mathrm{back}_{e^*}$ to $cll^\mathrm{back}_e$. This requires runtime linear in the number of Delaunay triangles over all time windows.
			
			\begin{figure}[h]
				\centering
				\includegraphics[width=\textwidth]{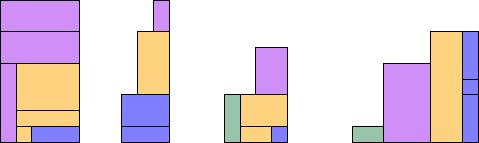}%
				\caption{The front activity space of four edges represented by their front coface lists $cll^\mathrm{front}_e$. For each edge, the rectangles corresponding to the same coface list $cl^\mathrm{front}_{e^*}$ are shown in the same color.}\label{fig:columnLists}
			\end{figure}

			Let $i_\mathrm{lower}$ and $i_\mathrm{upper}$ be the lowest and highest point indices of the vertices of $e$. Let us associate the terms \emph{left}, \emph{right}, \emph{top}, and \emph{bottom} with the temporal dimensions, matching the alignment in the figures so far. Each triangle $t$ is associated with a rectangular activity space~$L^D_t$, so we can also view the coface lists as rectangle lists, and we may refer to the $L^D_t$ as rectangles. We call a coface list \emph{grounded} if its lowest rectangle is aligned with the bottom staircase boundary, and \emph{floating} otherwise. We can now make some useful observations about the coface lists.

			\begin{lemma}\label{lem:obsDocked}
				Every rectangle is aligned with the right staircase boundary ($i_\mathrm{lower}$), or the bottom staircase boundary ($i_\mathrm{upper}$), or both.
			\end{lemma}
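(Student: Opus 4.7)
The plan is to reduce the claim to a direct computation on the vertex set of a coface. A coface $t$ of $e$ is a (possibly degenerate) triangle, so its vertex set consists of the two vertices of $e$ together with at most one additional point $p_k$. The right staircase boundary of $L^D_e$ sits at the first-dimension value $i_\mathrm{lower}$ and the bottom boundary at the second-dimension value $i_\mathrm{upper}$, while the corresponding sides of the rectangle $L^D_t = ]i_\mathrm{before}^t, i_\mathrm{lower}^t] \times [i_\mathrm{upper}^t, i_\mathrm{after}^t[$ (as derived in Section~\ref{subsec:delaunayAlpha}) sit at $i_\mathrm{lower}^t$ and $i_\mathrm{upper}^t$, the smallest and largest indices among the vertices of $t$. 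Thus ``aligned with the right boundary'' means $i_\mathrm{lower}^t = i_\mathrm{lower}$ and ``aligned with the bottom boundary'' means $i_\mathrm{upper}^t = i_\mathrm{upper}$.

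First I would express these two indices in terms of $k$: since the vertices of $t$ are precisely those of $e$ plus $p_k$, we have $i_\mathrm{lower}^t = \min(i_\mathrm{lower}, k)$ and $i_\mathrm{upper}^t = \max(i_\mathrm{upper}, k)$. A short case split on $k$ then finishes the argument. If $k < i_\mathrm{lower}$, then $i_\mathrm{upper}^t = i_\mathrm{upper}$, so $L^D_t$ is aligned with the bottom boundary. Symmetrically, if $k > i_\mathrm{upper}$, then $i_\mathrm{lower}^t = i_\mathrm{lower}$, so $L^D_t$ is aligned with the right boundary. If $i_\mathrm{lower} < k < i_\mathrm{upper}$, both equalities hold at once. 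Because each timestamp corresponds to at most one point, $k \neq i_\mathrm{lower}$ and $k \neq i_\mathrm{upper}$, so the three cases are exhaustive. The degenerate convex-hull case is handled trivially: when $t$ has no extra vertex, $i_\mathrm{lower}^t = i_\mathrm{lower}$ and $i_\mathrm{upper}^t = i_\mathrm{upper}$ by definition, so alignment with both boundaries is immediate.

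There is essentially no obstacle here; the lemma is a direct consequence of the fact that a triangular coface adds exactly one vertex beyond the two endpoints of $e$, combined with the explicit form of a Delaunay triangle's activity space. The only thing worth double-checking is that the alignment terminology matches equality of the relevant endpoints of the rectangle (rather than mere containment), and that the degenerate hull coface is subsumed rather than producing a genuinely separate case.
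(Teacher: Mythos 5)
Your proposal is correct and follows essentially the same argument as the paper: the coface adds at most one vertex $p_k$ beyond the endpoints of $e$, and since $k$ cannot be simultaneously smaller than $i_\mathrm{lower}$ and larger than $i_\mathrm{upper}$, at least one of $i_\mathrm{lower}$, $i_\mathrm{upper}$ survives as the extreme vertex index of the coface, giving alignment with the corresponding staircase boundary. Your explicit case split and handling of the degenerate hull facet just spell out what the paper states more tersely.
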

			\begin{proof}
				A point which forms a triangle with $e$ can have an index smaller than $i_\mathrm{lower}$, or larger than $i_\mathrm{upper}$, but not both. So at least one of the indices of $e$'s endpoints remains as $i_\mathrm{lower}$ or $i_\mathrm{upper}$ in every coface of $e$.
			\end{proof}

			\begin{lemma}\label{lem:obsShared}
				Two rectangles have the same left boundary iff they are in the same list. 
			\end{lemma}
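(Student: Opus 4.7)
The plan is to identify the left boundary of each rectangle $L^D_t$ with the row of the enumeration matrix of Section~\ref{subsec:delaunay} in which the coface $t$ is generated, so that the lemma reduces to the observation that each such row contributes a single instance of $e$ (and hence a single coface list). Recall that $L^D_t = ]i_\mathrm{before}^t, i_\mathrm{lower}^t] \times [i_\mathrm{upper}^t, i_\mathrm{after}^t[$, so the left boundary of the rectangle is determined solely by $i_\mathrm{before}^t$, and from the characterization of Delaunay activity spaces in Section~\ref{sec:alpha}, $t \in T_{k, j}$ precisely when $i_\mathrm{before}^t < k \leq i_\mathrm{lower}^t$ and $i_\mathrm{upper}^t \leq j < i_\mathrm{after}^t$.

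Next I would argue that $t$ is generated in row $k_0(t) := i_\mathrm{before}^t + 1$ of the matrix (taking $k_0(t) = 1$ if $i_\mathrm{before}^t = -\infty$). The hole-triangulation framework distributes the triangles of every $T_{k, j}$ across matrix rows $1, 2, \dots, k$ according to the topmost row in which each triangle first appears, via $T_{k, j} \setminus T_{k-1, j} \subseteq H_{k, j}$. For rows indexed $\leq i_\mathrm{before}^t$ the interfering point $p_{i_\mathrm{before}^t}$ is still in the window and prevents $t$ from being Delaunay, whereas at $k_0(t)$ it has just dropped out, so $t$ becomes Delaunay there and is generated for the first time.

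Finally, I would use the staircase property of $L^D_e$ to pin down the corresponding coface list: for fixed $k$, the set $\{j : (k, j) \in L^D_e\}$ is a contiguous interval starting at $i_\mathrm{upper}$, so $e$ is created and destroyed at most once in the IC of $P_{k, n}$ and there is a unique instance $e^*$ per such row. Since $t$ is a coface of $e$ at the moment of its generation (when $p_{i_\mathrm{upper}^t}$ is inserted in row $k_0(t)$), it is appended to $cl^\mathrm{front}_{e^*}$ of that unique instance. Hence $t_1$ and $t_2$ lie in the same list iff $k_0(t_1) = k_0(t_2)$ iff $i_\mathrm{before}^{t_1} = i_\mathrm{before}^{t_2}$ iff their rectangles share a left boundary. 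The main subtlety is making precise that the hole-triangulation framework stores each triangle in exactly one row of the matrix, namely its topmost row of appearance; this is only implicit in Section~\ref{subsec:delaunay} but follows from the decomposition above together with the fact that any triangle of $T_{k-1, j}$ not incident to $p_{k-1}$ persists unchanged into the next row.
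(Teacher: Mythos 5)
Your proof is correct and takes essentially the same route as the paper's: both identify the left boundary $i_\mathrm{before}$ of a coface's rectangle with the unique row (suffix IC) of the enumeration matrix in which that coface is computed, relying on the fact that each suffix IC only computes triangles that do not appear in earlier ICs. You additionally make explicit the uniqueness of the edge instance per row (via the staircase slice argument), a point the paper's terser proof leaves implicit.
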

			\begin{proof}
				The left boundary of $L^D_t$ is given by the index of the largest index point inside the circumcircle of $t$ whose index is smaller than any index of vertices of $t$. For triangles computed in the IC of $P_{1,n}$ no such point exists (and we use $-\infty$ as the index). For triangles computed in the IC of some $P_{i,n}$, that index is $i-1$ because ICs of suffixes of $P$ only compute triangles which do not appear in previous ICs.
			\end{proof}

			\begin{lemma}\label{lem:obsStacked}
				The rectangles within each list are stacked on top of each other, i.e.\ the top boundary of every rectangle aligns with the bottom boundary of the next rectangle (if one exists).
			\end{lemma}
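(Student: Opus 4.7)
The plan is to examine two consecutive entries $t_1, t_2$ in a coface list $cl^\mathrm{front}_{e^*}$ (the case of $cl^\mathrm{back}_{e^*}$ is symmetric) and identify a single point insertion during the enclosing IC which simultaneously fixes the top boundary of $L^D_{t_1}$ and the bottom boundary of $L^D_{t_2}$ at the same index.

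Suppose $e^*$ is the instance of $e$ in the IC of the suffix $P_{i_0, n}$. Throughout the contiguous lifespan of $e^*$ inside this IC, $e$ has exactly one front coface at any moment, and by the list's ordering $t_1$ precedes $t_2$ as successive front cofaces. The only event that can replace the front coface $t_1$ is the insertion of a point $p_{j_1}$ that lies in $t_1$'s circumcircle on the front side of $e$; this event destroys $t_1$ and creates the new front coface $t_2 = (e, p_{j_1})$. Since $p_{j_1}$ is the most recently inserted vertex of $t_2$, it has the highest index among $t_2$'s vertices, so $i_\mathrm{upper}(t_2) = j_1$, placing the bottom of $L^D_{t_2}$ at $j_1$.

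It remains to show $i_\mathrm{after}(t_1) = j_1$, which is the only step requiring care: $i_\mathrm{after}$ is defined globally on $P$ rather than with respect to the suffix used by the IC. By construction $p_{j_1}$ lies in $t_1$'s circumcircle with $j_1 > i_\mathrm{upper}(t_1)$, so it suffices to exclude every index $k$ with $i_\mathrm{upper}(t_1) < k < j_1$ from that circumcircle. Every such $k$ satisfies $k \geq i_0$ (since $i_\mathrm{upper}(t_1) \geq i_0$), so $p_k$ is inserted into the IC after $t_1$'s creation but before $p_{j_1}$; had $p_k$ lain inside $t_1$'s circumcircle, it would have destroyed $t_1$ prematurely. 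Hence $j_1$ is the smallest index exceeding $i_\mathrm{upper}(t_1)$ whose point lies in the circumcircle, giving $i_\mathrm{after}(t_1) = j_1$, so the top boundary of $L^D_{t_1}$ coincides with the bottom boundary of $L^D_{t_2}$ at $j_1$. The main obstacle is precisely this reconciliation of the local, IC-restricted destruction argument with the global definition of $i_\mathrm{after}$; once one notes $i_\mathrm{upper}(t_1) \geq i_0$, the interval of intermediate indices lies entirely within the IC's range and the local argument lifts to the global claim.
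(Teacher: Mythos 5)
Your argument is correct and essentially the paper's own: both identify the single insertion of $p_{j_1}$ that destroys $t_1$ and simultaneously creates the next coface $t_2$, whose highest vertex index $j_1$ gives the bottom boundary of $L^D_{t_2}$, matching the top boundary of $L^D_{t_1}$. Your extra reconciliation of the IC-local destruction time with the globally defined $i_\mathrm{after}(t_1)$ merely spells out what the paper treats as immediate; note that, like the paper, you silently assume that consecutive entries of $cl^\mathrm{front}_{e^*}$ are consecutive-in-time front cofaces of $e$ (no intervening coface created in a higher row of the matrix), which is true but argued in neither proof.
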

			\begin{proof}
				When a point $p_{i_\mathrm{new}}$ inserted into a Delaunay triangulation destroys a coface $t$ of $e$ (and thus defines the top boundary of $L^D_t$ as $i_\mathrm{new}$), it either destroys $e$ entirely or it creates a new coface $t'$ of $e$. But then the bottom boundary of $L^D_{t'}$ must be $i_\mathrm{new}$ because $i_\mathrm{new}$ is the highest point index among the vertices of $t'$. 
			\end{proof}

			\begin{lemma}\label{lem:obsOnTop}
				All rectangles of grounded coface lists are entirely below all rectangles of floating coface lists.
			\end{lemma}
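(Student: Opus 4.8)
The plan is to reduce everything to a one-dimensional sweep along the right staircase boundary $i_{\mathrm{lower}}$. The key preliminary observation is that by Lemma~\ref{lem:obsDocked} every rectangle that is \emph{not} aligned with the bottom boundary is aligned with the right boundary; hence every rectangle of a floating list, and also every non-bottom-aligned rectangle of a grounded list, extends all the way to $i_{\mathrm{lower}}$ and therefore already shows up on the vertical line $i=i_{\mathrm{lower}}$. So it suffices to understand the order in which the front cofaces of $e$ appear along that line, together with the observation that the bottom-aligned rectangles of grounded lists sit at the bottom of the staircase and can be placed below this whole ``upper'' region using the per-window-start threshold from the next paragraph and the monotonicity of the staircase's left boundary.

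Next I would run the sweep. Fixing $i=i_{\mathrm{lower}}$ and letting $j$ grow, the front coface of $e$ is carried by the extremal empty disk through the endpoints of $e$; since all these disks are centered on the common bisector line (Figure~\ref{fig:alphaBallRange}) and are nested on each side of $e$, the disk only shrinks on the front side and grows on the back side as $j$ increases, so its center moves monotonically. Two consequences follow. First, for $j\ge i_{\mathrm{upper}}$ every newly inserted point has index $>i_{\mathrm{upper}}$, so once the front coface uses a third vertex of index $>i_{\mathrm{upper}}$ it can never revert to one of index $\le i_{\mathrm{upper}}$; thus the line splits into one bottom-aligned stretch followed by one floating stretch. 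Second, each point $p_s$ with $s<i_{\mathrm{lower}}$ lies inside the current disk on a monotone set of sweep positions: an initial stretch if $p_s$ is in front of $e$, a final stretch if it is behind.

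The main obstacle is that a grounded list need not consist only of bottom-aligned rectangles: by Lemma~\ref{lem:obsStacked} it may carry non-bottom-aligned rectangles stacked above its base, so the statement is strictly stronger than ``bottom-aligned below non-bottom-aligned'' and I cannot just invoke the first consequence above. To overcome this I would track the left boundary of the coface at each sweep position, which by the proof of Lemma~\ref{lem:obsShared} is the largest index $<i_{\mathrm{lower}}$ among the in-disk points, i.e.\ the \emph{maximum} index over the points $p_s$ of the previous paragraph; a list is grounded exactly when this index is already attained while the coface is still bottom-aligned. Feeding the per-point monotonicity into this maximum, I would carry out the short case analysis on whether the two competing defining points lie in front of or behind $e$ and on which index is larger, and show that the forbidden situation---a floating list's coface occurring at a smaller $j$ than some non-bottom-aligned coface of a grounded list---forces one of the two points to be simultaneously inside and outside the relevant disk, a contradiction. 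Therefore along $i=i_{\mathrm{lower}}$ every grounded coface precedes every floating one; since all rectangles in question are right-aligned, this vertical order transfers verbatim to the staircase, and the bottom-aligned rectangles of grounded lists, lying below the entire floating stretch, complete the claim.
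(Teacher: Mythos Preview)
Your plan is far more elaborate than what the lemma needs, and it carries a real gap. The paper's proof is a one-sentence combinatorial argument that never leaves the rectangle picture: no rectangle can touch a floating list $F$ along $F$'s left boundary, by Lemmas~\ref{lem:obsDocked} and~\ref{lem:obsShared} together with the no-holes property. Spelled out: every rectangle of $F$ is right-aligned (none is bottom-aligned, so Lemma~\ref{lem:obsDocked} applies); if some rectangle $R$ touched $F$ from the left it could not be right-aligned and hence would be bottom-aligned; but then the point just below $F$'s bottom rectangle at horizontal position $\ell_F+1$ lies in the staircase and is covered by some rectangle whose left boundary is $\le \ell_F$---equality puts that rectangle into $F$ by Lemma~\ref{lem:obsShared}, strict inequality makes it overlap $R$. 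So nothing at all lies left of $F$ at $F$'s heights, and a grounded list, being a contiguous stack rooted at $i_{\mathrm{upper}}$, must sit entirely below.

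You instead return to circumcircle geometry and try to read off list membership from the sweep along $i=i_{\mathrm{lower}}$. Two issues. First, your characterization ``a list is grounded exactly when this index is already attained while the coface is still bottom-aligned'' is wrong as stated: along that sweep only the single bottom-right rectangle is bottom-aligned, so $i_{\mathrm{before}}$ takes exactly one value there, yet many lists can be grounded. Second, the ``short case analysis'' you defer is where all the content would have to live; the maximum of your per-point monotone indicator functions is not itself monotone, so ruling out interleavings of front-side and back-side witnesses is not obviously short, and carrying it through would essentially rebuild the combinatorial facts already packaged in Lemmas~\ref{lem:obsDocked}--\ref{lem:obsStacked}. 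The rectangle-level argument avoids all of this.
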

			\begin{proof}
				No rectangle can touch any rectangle of a floating coface list along the list's left boundary due to Lemmas~\ref{lem:obsDocked} and~\ref{lem:obsShared} and the fact that the staircase has no holes.
			\end{proof}

			In the following sections we focus only on $L^{\alpha_\mathrm{front}}_e$, the description for $L^{\alpha_\mathrm{back}}_e$ is analogous. We can then merge $L^{\alpha_\mathrm{front}}_e$ and $L^{\alpha_\mathrm{back}}_e$ to get $L^\alpha_e$, or keep them separate to preserve the ability to distinguish between the two sides of $e$.

		\subsection{Ordering Activity Space Rectangles into Lists}\label{subsec:algRelisting}
			We begin by computing the activity spaces $L^D_t$ of all Delaunay triangles $t$ we found as cofaces of some edge $e$ in Section~\ref{subsec:algOrdering}. By Lemma~\ref{lem:obsDocked}, every rectangle touches the staircase boundary either on the right, or the bottom, or both. For both sides of $e$, we create a \emph{bottom} list and a \emph{right} list. The \emph{bottom} list holds all rectangles touching the bottom boundary, ordered left to right. The \emph{right} list holds all remaining rectangles, ordered bottom to top. The bottom right rectangle is only listed in the \emph{bottom} list. Figure~\ref{fig:rectLists} shows an example of \emph{bottom} and \emph{right} lists on the left.

			Note that these lists are computed for both sides of $e$ during the computation of $L^{\alpha_\mathrm{front}}_e$, and again for both sides during the computation of $L^{\alpha_\mathrm{back}}_e$. This is because because Section~\ref{subsec:algCleaning} will modify these lists based on which side of $e$ is being considered. We describe how to compute these lists for the front side of $e$, the process for the back side is analogous.

			The \emph{bottom} list is constructed by iterating left to right over the coface lists $cl^\mathrm{front}_{e^*}$ which are grounded, always adding the bottom-most rectangle to the \emph{bottom} list.
			
			By Lemma~\ref{lem:obsOnTop}, we can fill the \emph{right} list bottom-up by first only considering grounded coface lists, and then only considering  floating coface lists. Because the staircase has no holes, if a grounded coface list, say $cl^\mathrm{front}_{e^*}$, contains a rectangle $L^D_t$ of the \emph{right} list, all grounded coface lists right of $cl^\mathrm{front}_{e^*}$ can only contain rectangles underneath $L^D_t$. We can thus start filling the \emph{right} list by iterating right to left over grounded coface lists, bottom up within each coface list, and adding all rectangles which do not touch the bottom boundary to the \emph{right} list.
			
			The coface lists within $cll^\mathrm{front}_e$ are ordered left to right according to their left boundary. Due to the staircase property and Lemma~\ref{lem:obsDocked}, this order is also a bottom to top order when restricted to floating coface lists. So to complete the \emph{right} list, we simply iterate left to right over the floating coface lists in $cll^\mathrm{front}_e$, bottom up within each coface list, and add all rectangles to the \emph{right} list.

		\subsection{Simplifying The Activity Space of a Delaunay Edge}\label{subsec:algCleaning}
			We noted in Section~\ref{subsec:activitySpaceAlpha} that $L^{\alpha_\mathrm{front}}_e$ can be partitioned into cuboids whose $\alpha$-value range depends on the cicumcircles of the cofaces of $e$ in the respective time windows. We would like to avoid explicitly computing cuboids with empty $\alpha$-value ranges. So we will now discard the front rectangles of $e$ which force an empty value range, i.e.\ those which correspond to Delaunay triangles whose circumcenter lies behind $e$. We modify the \emph{bottom} and \emph{right} lists of the front as follows.
			
			Following the logic of Lemma~\ref{lem:obsStacked}, inserting points into a Delaunay triangulation which admits an empty $\alpha$-ball in front of $e$ will move the circumcenter of the front coface of $e$, which limits the radius of that $\alpha$-ball, closer and closer to $e$, until it eventually is behind~$e$. In the temporal dimensions of activity spaces, inserting points corresponds to decreasing the value in the first dimension or increasing the value in the second dimension. Removing the undesirable rectangles therefore cuts off some rectangles of the \emph{bottom} list from the left, and some rectangles of the \emph{right} list from the top. Figure~\ref{fig:rectLists} shows this step on the top left.
			
			\begin{figure*}[t]
				\centering
				\includegraphics[width=\textwidth]{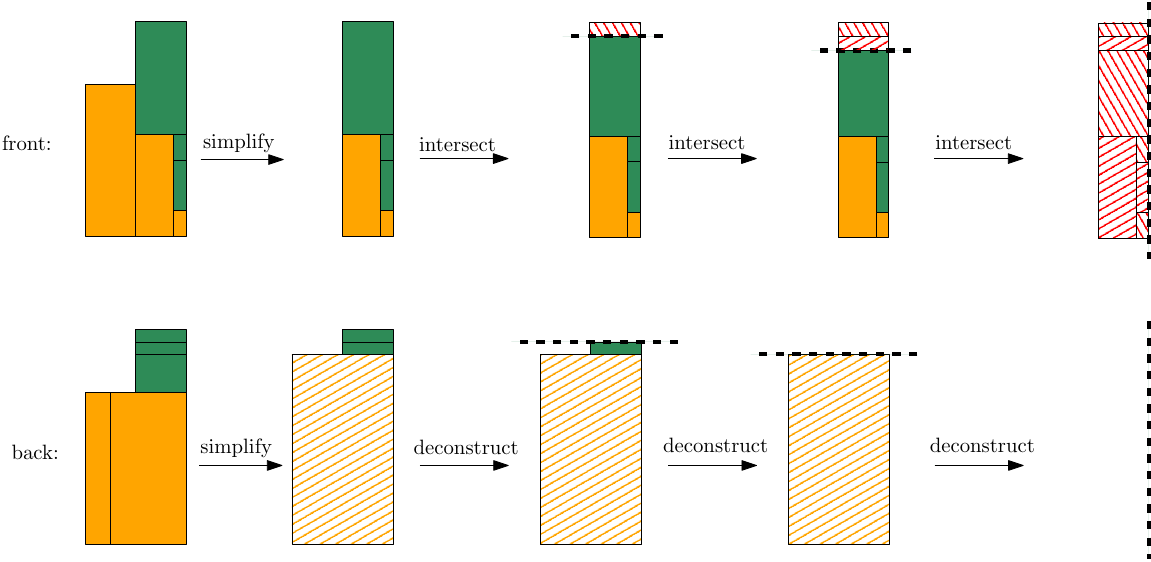}%
				\caption{The \emph{bottom} and \emph{right} lists of $e$'s front and back as they are simplified and intersected. On the left, the \emph{bottom} lists are shown with orange rectangles and the \emph{right} lists with green rectangles. In the simplification step we remove some front rectangles from the left, and merge some back rectangles from the bottom right. The merged rectangle $L^\mathrm{dummy}$, shown in an orange hatch pattern, is larger than the rectangles it represents. Then the back staircase is deconstructed rectangle by rectangle, always intersecting the removed rectangle with the front rectangles. The intersection step corresponds to cutting off all remaining rectangles from the top if the back rectangle is from the \emph{right} list, or from the left if it is from the \emph{bottom} list. The resulting intersected rectangles are shown in a red hatch pattern where the front rectangles used to be. Together with their value range of $\alpha$, the resulting intersected rectangles yield the cuboids which partition $L^{\alpha_\mathrm{front}}_e$.}\label{fig:rectLists}
			\end{figure*}

			We would also like to avoid creating multiple cuboids with the same $\alpha$-value range, guaranteeing that the set of cuboids we compute for $L^{\alpha_\mathrm{front}}_e$ is minimal. To that end, we will combine all back rectangles of $e$ which have no effect on the front $\alpha$-value range. Using similar logic as before, as points are inserted, the circumcenter of back cofaces of $e$ may eventually be in front of $e$. But until then, the back cofaces do not influence the value range of empty $\alpha$-balls in front of $e$. We will merge all rectangles belonging to such cofaces into one combined rectangle, i.e.\ modify the \emph{bottom} and \emph{right} lists of the back.

			Starting from the bottom right, we traverse the \emph{bottom} list to the left, and the \emph{right} list to the top, to find out for how long in each temporal dimension $e$'s back coface has its circumcenter behind $e$. If either direction extends further than the bottom right rectangle, say~$L^D_t$, we replace $L^D_t$ with a dummy rectangle, say $L^\mathrm{dummy}$, whose extent in the two temporal dimensions covers the time windows in which $e$'s back coface has its circumcenter behind $e$. We also remove any rectangles which are now covered by the enlarged bottom right rectangle $L^\mathrm{dummy}$. Figure~\ref{fig:rectLists} shows this step on the bottom left.
			
			Observe that no rectangles can be covered only partially: By definition, the left boundary of $L^\mathrm{dummy}$, say $i_\mathrm{left}$, splits all time windows into two groups; those time windows that begin at or before $i_\mathrm{left}$, in which any back coface of $e$ must have its circumcenter in front of $e$, and those time windows that begin after $i_\mathrm{left}$, in which any back coface of $e$ must have its circumcenter behind $e$. No back coface of $e$ can exist in time windows of both groups, so no rectangle can cross the left boundary of $L^\mathrm{dummy}$. Analogous logic applies in the other temporal dimension.

			$L^\mathrm{dummy}$ does not correspond to exactly one Delaunay triangle, and in fact we may actually have extended the staircase to include time windows in which $e$ is not even a Delaunay edge. But this is not a problem. In Section~\ref{subsec:algIntersecting} we intersect the back rectangles with the front rectangles, which we did not extend. Intersecting with them guarantees that we do not include additional time windows in the computed cuboids. Most importantly, Lemma~\ref{lem:obsDocked} and the staircase property still hold with $L^\mathrm{dummy}$.
			
		\subsection{Linking Activity Space Rectangles}\label{subsec:algLinking}
			To conveniently traverse staircases, we would like to know for each rectangle $L^D_{t_b}$ in the \emph{bottom} list which rectangle of the \emph{right} list, if any, is above $L^D_{t_b}$. Similarly, we also need to know for each rectangle $L^D_{t_r}$ in the \emph{right} list which rectangle of the \emph{bottom} list, if any, is left of $L^D_{t_r}$. Due to Lemma~\ref{lem:obsDocked} at most one such rectangle can exist for each $L^D_{t_b}$ or $L^D_{t_r}$. Since the staircase has no holes, the rectangles of the \emph{bottom} list and those of the \emph{right} list fit together like two elaborate puzzle pieces, and walking along their shared seam by traversing both lists in parallel, starting at the bottom right, allows us to find all neighbor relations in one pass.

		\subsection{Computing the Activity Space of an \texorpdfstring{$\boldsymbol{\alpha}$}{α}-Edge}\label{subsec:algIntersecting}
			Let us give a short summary of how $L^{\alpha_\mathrm{front}}_e$ relates to the Delaunay activity spaces of $e$'s cofaces, as described in Section~\ref{subsec:activitySpaceAlpha}. The $\alpha$-value range of $L^{\alpha_\mathrm{front}}_e$ in time window $P_{i,j}$ depends on the cofaces $t_\mathrm{front}$ and $t_\mathrm{back}$ of $e$ in $T_{i,j}$. If $t_\mathrm{front}$ has its circumcenter behind $e$, the $\alpha$-value range is empty, otherwise its circumradius is the upper bound of the $\alpha$-value range. If $t_\mathrm{back}$ has its circumcenter in front of $e$, its circumradius is the lower bound of the $\alpha$-value range, otherwise $r_\mathrm{min}$, the radius of the smallest disk which passes through the vertices of $e$, is the lower bound. Every pair $(t_\mathrm{front}, t_\mathrm{back})$ over all time windows yields a cuboid whose extent in the temporal dimensions is the intersection of $L^D_{t_\mathrm{front}}$ and $L^D_{t_\mathrm{back}}$, and whose $\alpha$-value range is determined as above. These cuboids partition $L^{\alpha_\mathrm{front}}_e$. Having simplified the activity space rectangles belonging to the front cofaces and back cofaces in Section~\ref{subsec:algCleaning}, we are guaranteed that every non-empty intersection of front and back rectangles yields a distinct, non-empty, $\alpha$-value range, i.e.\ the computed set of cuboids representing $L^{\alpha_\mathrm{front}}_e$ is minimal. The following approach computes these cuboids in output-sensitive linear time.

			Figure~\ref{fig:rectLists} illustrates our approach. We deconstruct the back staircase one rectangle at a time, always ensuring that the remaining rectangles still fulfill the staircase property. To that end, we always take away either the left-most rectangle of the \emph{bottom} list, or the top-most rectangle of the \emph{right} list. Removing a rectangle with a neighbor in the other list would violate the staircase property, so we use the neighbor relations computed in Section~\ref{subsec:algLinking} to determine whether a rectangle still has a neighbor in the other list. Note that by Lemma~\ref{lem:obsDocked}, there is always at least one rectangle we can take away without violating the staircase property. Every back rectangle $L^D_{t_\mathrm{back}}$ we take away is intersected with the front rectangles, and every non-empty intersection yields a cuboid as described above. By remembering our progress along the front's \emph{bottom} and \emph{right} lists, we functionally deconstruct the front staircase as well: front rectangles which are covered by $L^D_{t_\mathrm{back}}$ are discarded, front rectangles which are only intersected by $L^D_{t_\mathrm{back}}$ are cut off along the boundaries of $L^D_{t_\mathrm{back}}$.
			
			Observe that after modifying the front rectangles and back rectangles in Section~\ref{subsec:algCleaning}, the area covered by the back staircase is a superset of the area covered by the front staircase, so preserving the staircase property on the back means we automatically preserve the staircase property on the front. Any front rectangle which we cut off along $L^D_{t_\mathrm{back}}$'s boundaries is functionally split into two rectangles, one which remains in the front staircase, and one which is used for the creation of a cuboid.
			
			We store our current intersection position along the \emph{bottom} and \emph{right} lists on the front and the back. Using the neighbor relations computed in Section~\ref{subsec:algLinking} we can easily find all front rectangles which intersect $L^D_{t_\mathrm{back}}$ in linear time. In the end, we will have computed the cuboids which represent $L^{\alpha_\mathrm{front}}_e$.
			
			After repeating this process for the other side of $e$, and for all other Delaunay edges, we have achieved our desired result and get the following theorem.
			
			\begin{theorem}
				There exists an algorithm to compute the temporal $\alpha$-shape $\alpha_T$, which is a description of all $\alpha$-shapes over all time windows and all values of $\alpha$, for a set of timestamped points in $\mathbb{R}^d$ in output-sensitive linear time $\mathcal{O}(|\alpha_T|)$ for arbitrary fixed $d$.
			\end{theorem}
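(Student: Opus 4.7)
The plan is to verify the four-phase pipeline of Section~\ref{sec:algAlpha}, built on top of the temporal Delaunay enumeration of~\cite{weitbrecht2022linear}, and to show both that it emits exactly the minimum cuboid partition of every $L^\alpha_e$ and that it does so in time $\mathcal{O}(|T|+|\alpha_T|)$, which then collapses to $\mathcal{O}(|\alpha_T|)$ via a small counting argument. I first invoke the modified enumeration of Section~\ref{subsec:algOrdering}, making every IC maintain, for every newly created edge instance $e^*$, the two side-indexed coface lists $cl^{\mathrm{front}}_{e^*}$ and $cl^{\mathrm{back}}_{e^*}$. Each pointer stored in these lists is charged to a distinct triangle of $T$, so the preprocessing phase costs $\mathcal{O}(|T|)$ and yields, per edge $e$, the ordered sequences $cll^{\mathrm{front}}_e$ and $cll^{\mathrm{back}}_e$ that realize the partition of $L^D_e$ described in Section~\ref{subsec:activitySpaceDelaunay}.

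Correctness of the per-edge computation I would establish by maintaining two invariants throughout Sections~\ref{subsec:algRelisting}--\ref{subsec:algIntersecting}. The \emph{structural} invariant is that the current \emph{bottom} and \emph{right} lists on each side of $e$ always describe a staircase satisfying Lemmas~\ref{lem:obsDocked}--\ref{lem:obsOnTop}; in particular, after Section~\ref{subsec:algCleaning} the front staircase describes exactly the projection of $L^{\alpha_{\mathrm{front}}}_e$ onto the two temporal dimensions (only front cofaces with circumcenter in front of $e$ remain), and the back staircase covers it entirely (the $L^{\mathrm{dummy}}$ merge collapses all back cofaces whose circumcenter is still behind $e$, which all contribute the same range lower bound $r_\mathrm{min}$). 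The \emph{semantic} invariant is that for every cuboid emitted by the sweep of Section~\ref{subsec:algIntersecting}, its temporal extent is $L^D_{t_{\mathrm{front}}} \cap L^D_{t_{\mathrm{back}}}$ for a unique realized coface pair $(t_{\mathrm{front}}, t_{\mathrm{back}})$, and its $\alpha$-range is the interval $[\max(r_{\mathrm{back}}, r_\mathrm{min}), r_{\mathrm{front}}]$ derived in Section~\ref{subsec:activitySpaceAlpha}. Together these invariants exhibit a partition of $L^{\alpha_{\mathrm{front}}}_e$; minimality follows because the front cleaning discards exactly the cuboids of empty $\alpha$-range and the back merge collapses exactly the cuboids that would share the same lower bound, so no two emitted cuboids are temporally adjacent with identical $\alpha$-range.

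For the runtime I would charge every operation either to an existing rectangle in some $cll_e$ (and thus to a triangle of $T$) or to an emitted cuboid. List construction in Section~\ref{subsec:algRelisting} and the single walk of Section~\ref{subsec:algLinking} are each linear in the rectangle count per edge; the cleaning scans of Section~\ref{subsec:algCleaning} stop as soon as a circumcenter flips sides and also respect this bound. The intersection sweep of Section~\ref{subsec:algIntersecting} removes one back rectangle at a time using the stored neighbor pointers, and by Lemma~\ref{lem:obsDocked} a valid removal always exists; each back rectangle is matched against front rectangles in amortized $\mathcal{O}(1 + k)$ time where $k$ is the number of cuboids it emits. Summed over all edges this gives $\mathcal{O}(|T| + |\alpha_T|)$; the $|T|$ term is absorbed by observing that every $t \in T$ has its circumcenter on the interior side of at least two of its edges, so $t$ is a front coface realizing at least one non-empty cuboid of $\alpha_T$ and hence $|T| \in \mathcal{O}(|\alpha_T|)$. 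Dimensional independence follows because the argument only uses that $T$ can be enumerated in $\mathcal{O}(|T|)$ time and that the activity-space staircase structure of Section~\ref{subsec:delaunayAlpha} lifts, both of which hold for any fixed $d$.

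The delicate part, and the step I expect to be the main obstacle, is the amortized bookkeeping of Section~\ref{subsec:algIntersecting}. When a back rectangle $L^D_{t_{\mathrm{back}}}$ is popped, the remaining front staircase must be updated in place so that subsequent pops still see a valid staircase: front rectangles fully covered by $L^D_{t_{\mathrm{back}}}$ are destroyed together with the correct emitted cuboid, while partially covered front rectangles must be cut along one axis-aligned boundary of $L^D_{t_{\mathrm{back}}}$ and relinked via updated neighbor pointers so that the puzzle-piece seam of Section~\ref{subsec:algLinking} is restored. Showing that at most one front rectangle per back-pop is cut rather than destroyed (and that the neighbor maintenance is therefore $\mathcal{O}(1)$ per pop on top of the $k$ emitted cuboids) is where the cleaning-induced inclusion ``back staircase dominates front staircase'' is most needed: it guarantees that no destroyed front rectangle ever reappears and that the cut front rectangle inherits a single well-defined new neighbor on each side. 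Once this bookkeeping lemma is established, the output-sensitive linear bound is immediate.
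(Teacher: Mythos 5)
Most of your proposal tracks the paper's own argument: the pipeline of Sections~\ref{subsec:algOrdering}--\ref{subsec:algIntersecting}, the per-step linear sweeps charged to rectangles of $T$, and the output-sensitive charge of the intersection sweep to emitted cuboids all match, and your structural/semantic invariants are the paper's correctness claims made explicit. The genuine gap is in the one place where you deviate: the absorption of the $\mathcal{O}(|T|)$ term into $\mathcal{O}(|\alpha_T|)$. Your charging argument (``every $t \in T$ has its circumcenter on the interior side of at least two of its edges, so $t$ realizes a non-empty cuboid, hence $|T| \in \mathcal{O}(|\alpha_T|)$'') does not go through as stated. If the charge is into the \emph{minimal} partitions counted by $|\alpha_T|$, it is not injective: a single cuboid of a minimal partition of $L^\alpha_e$ may span time windows with many distinct front cofaces, since the cuboid's $\alpha$-interval only has to be contained in each window's range (general position does not forbid equal or nested circumradii across cofaces), so many triangles can be charged to one and the same cuboid. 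If instead the charge is into the cuboids the algorithm actually \emph{emits}, injectivity holds, but then you need the emitted count to be $\mathcal{O}(|\alpha_T|)$; the paper's own justification of that (the per-side versus combined partition discrepancy, bounded by $\mathcal{O}(|T|)$ in the remark after the theorem) itself presupposes $|T| \in \mathcal{O}(|\alpha_T|)$, so your argument becomes circular exactly at the step it was meant to establish.

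The geometric premise is also dimension-specific, contrary to your closing claim: in $d \geq 3$ the circumcenter of a simplex can lie on the non-interior side of \emph{two or more} facets (lift the planar points $(\pm 1,0)$, $(0,\pm 0.1)$ onto a sphere around the origin: the center lies beyond both facets containing the long edge), and only ``at least one facet'' is guaranteed (the barycentric coordinates of the circumcenter sum to $1$). The paper avoids all of this: it invokes Lemma~4 of~\cite{funke2020efficiently}, which states that over all time windows the number of Delaunay simplices and the number of Delaunay faces are of the same order (and generalizes to arbitrary $d$), and combines it with the fact that every Delaunay face is an $\alpha$-face for a nonempty $\alpha$-range, hence contributes at least one cuboid to $|\alpha_T|$. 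That yields $|T| \in \mathcal{O}(|\alpha_T|)$ robustly, independent of circumradius coincidences and of dimension; you should replace your charging step with this (or an equivalent face-counting) argument.
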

			\begin{proof}
				The Delaunay algorithm~\cite{funke2020efficiently,weitbrecht2022linear} runs in output-sensitive linear time $\mathcal{O}(|T|)$. Lemma~4 of~\cite{funke2020efficiently}, which states that the number of Delaunay simplices and the number of Delaunay faces are of the same order, generalizes to arbitrary $d$ trivially. So by the fact that every Delaunay face is an $\alpha$-face, the runtime of the Delaunay algorithm is also $\mathcal{O}(|\alpha_T|)$. Ordering simplices into lists with faces in Section~\ref{subsec:algOrdering} takes time $\mathcal{O}(|T|)$. Sections~\ref{subsec:algOrdering} to~\ref{subsec:algLinking} only perform linear sweeps, so their runtime is $\mathcal{O}(|T|)$ as well. Intersecting rectangles for face $f$ in Section~\ref{subsec:algIntersecting} requires time linear in the number of cuboids computed for $f$. Since the number of cuboids we compute is minimal, and the $\alpha$-face activity space of either side of $f$ can not be described with asymptotically less space than a cuboid partition, this final step requires $\mathcal{O}(|\alpha_T|)$ time over all Delaunay faces.
			\end{proof}

			Computing cuboids separately for $L^{\alpha_\mathrm{front}}_e$ and $L^{\alpha_\mathrm{back}}_e$ may create overlapping cuboids for $e$ in time windows in which both sides of $e$ admit an empty $\alpha$-ball. But the simplification step of the back staircase partition merges all these time windows into one combined rectangle, so the cuboids created by the intersection step for the potentially problematic time windows correspond exactly to the rectangles of the front staircase partition. Over all edges, these are at most $\mathcal{O}(|T|)$, so the asymptotic complexity of the computed representation is unaffected.
			
	\begin{figure*}[!h]
		\centering
		\includegraphics[width=.3333\textwidth]{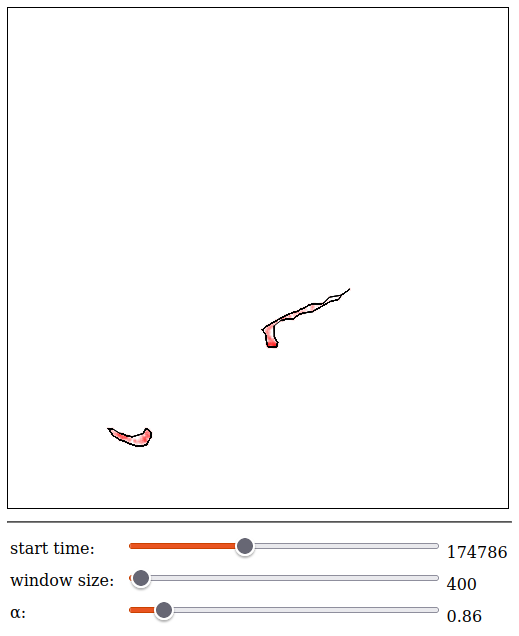}%
		\includegraphics[width=.3333\textwidth]{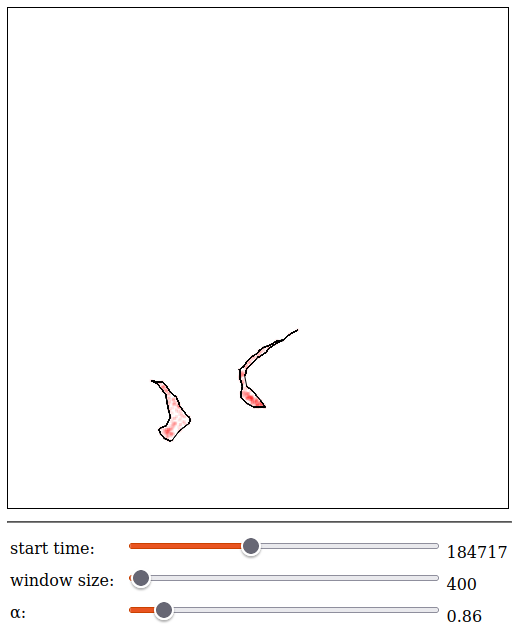}%
		\includegraphics[width=.3333\textwidth]{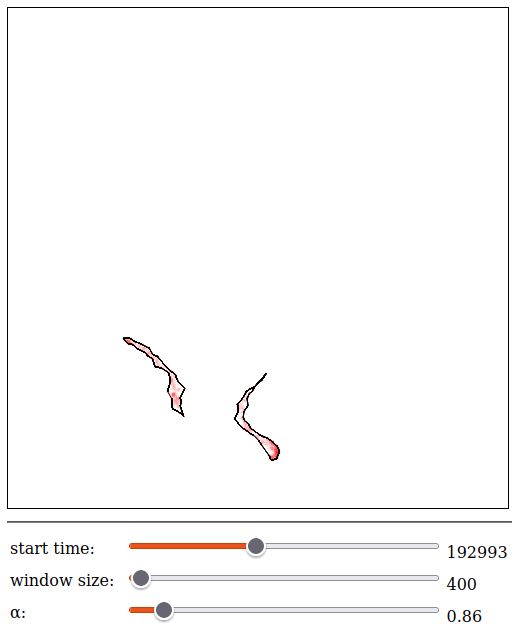}
		\includegraphics[width=.3333\textwidth]{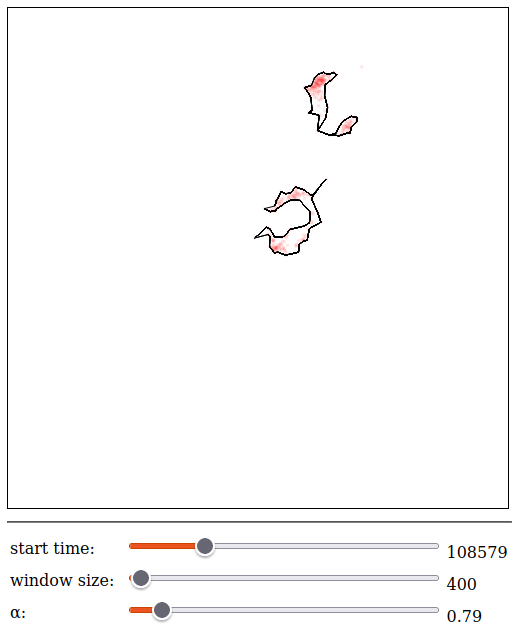}%
		\includegraphics[width=.3333\textwidth]{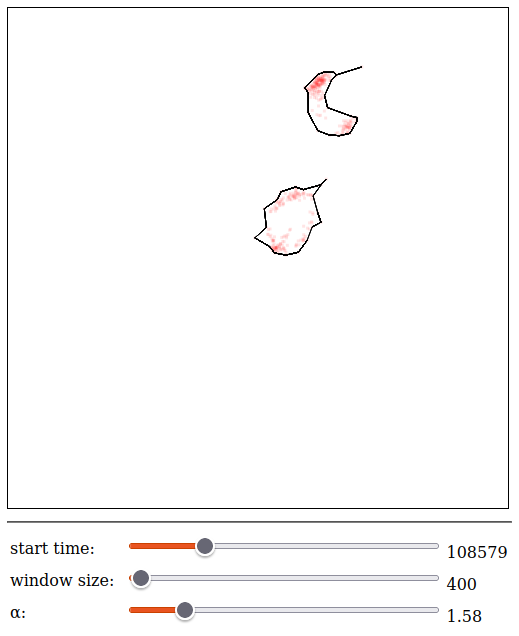}%
		\includegraphics[width=.3333\textwidth]{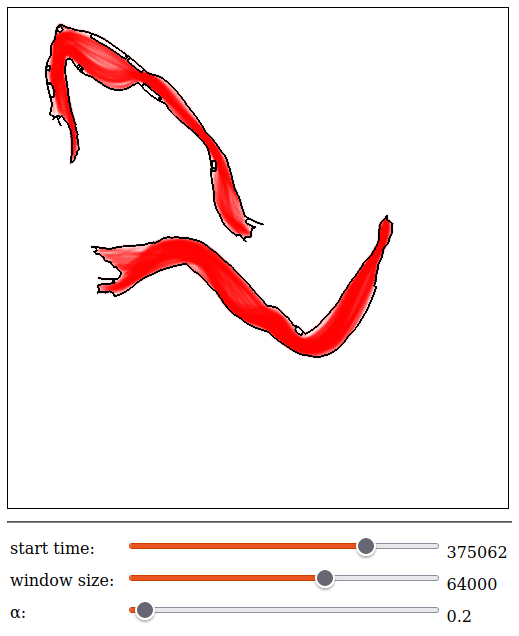}
		\caption{Interactive visualization of spatio-temporal point sets using $\alpha$-shapes. The box has a side length of 40 units. The particle sets in red are overlaid only for reference. \textbf{Top:} Following particle swarm movement by advancing the time window. \textbf{Bottom left and bottom middle:} Changing the $\alpha$-value to control the level of detail in the outline. \textbf{Bottom right:} Revealing swarm trajectories using a larger time window.}\label{fig:swarmDemo}
	\end{figure*}

	\section{Demo Application}\label{sec:demo}
		We create an interactive visualization of the temporal $\alpha$-shape $\alpha_T$ as a demo application. In order to query $\alpha_T$ we need to be able to identify all cuboids which contain the point $(i,j,\alpha)$ for a given time window $P_{i,j}$ and $\alpha$-value. A rectangle stabbing query data structure does exactly that, so we use a \emph{cs-priority-box-tree}~\cite{Agarwal2002} to identify the cuboids which match a given query. Our user interface provides three sliders to adjust the time window and value of $\alpha$ in real time. As sliders are moved, the cs-priority-box-tree is queried to fetch the corresponding cuboids, and the corresponding edges are rendered. Figure~\ref{fig:swarmDemo} shows how this application helps visualize spatio-temporal point sets by inspecting different lengths of time windows, and by adjusting the value of $\alpha$. A video demo is available online at \url{https://youtu.be/Esh7_uzmBac}.

		The number of displayed edges is typically orders of magnitude smaller than the point set they represent. Unlike this small example, real-world applications can have too many points to handle user interaction in real time. Then only an outline-based representation is feasible, making the ability to efficiently pre-compute visual representations even more important.

\section{Experimental Results}\label{sec:results}
		We evaluate a 2D Java implementation of our algorithm to compute both $T$ and $\alpha_T$. An implementation which works in arbitrary dimension is available on GitHub~\cite{weitbrecht2023github}. We store coordinates with double precision and we use objects for edges and triangles. We use the following systems for our experiments:
		\begin{itemize}
			\item \textit{system1.} Intel\textregistered{} Core i5-10500 CPU @ 3.10GHz and 64 GB of RAM.
			\item \textit{system2.} Intel\textregistered{} Xeon\textregistered{} E5-2650v4 CPU @ 2.20GHz and 768 GB of RAM.
		\end{itemize}
		We use the following data sets in our experiments:
		\begin{itemize}
			\item \textit{storm.} Storm events observed in the United States in the years 1991 to 2000, provided by the NOAA~\cite{stormDataSet}. Data points with identical timestamps or coordinates were removed, resulting in 60,173 data points in total. This data set was also used to evaluate the approach in~\cite{Haunert19}, where slightly different duplicate filtering leads to a version of this data set with 59,789 points.
			\item \textit{swarm.} A simulated 2D particle animation of 400 particles, inspired by the \emph{swarm} screensaver (\url{https://youtu.be/epskMJVIRXY}). There are 1,200 movement steps in which 398 follower particles always follow the closest of two leader particles, see Figure~\ref{fig:swarmDemo}. A copy of each particle is created for each movement step, so each of the 1,200 movement steps is encoded with 400 timestamps and we get 480,000 data points in total. For numerical stability, we perturb points randomly in a radius of $\mathcal{O}(10^{-7})$ times the canvas size.
		\end{itemize}
		
		\subsection{Naive Approach}
			To motivate precomputing $\alpha_T$ we benchmark the naive approach. We have implemented a randomized incremental construction of the Delaunay triangulation in the same codebase as the temporal $\alpha$-shape code. Point location is accomplished by tracing the location of points through the history of triangles which have contained them during the incremental construction so far~\cite{Guibas1992}.
			
			We compute the Delaunay triangulation of time windows of various sizes in the \textit{swarm} data set and identify the $\alpha$-edges for various $\alpha$-values. Table~\ref{tab:ricTimings} gives timings on \textit{system1} and the number of computed Delaunay triangles and resulting $\alpha$-edges. It shows that computing the entire Delaunay triangulation is quite wasteful when only the $\alpha$-shape is required, and that querying a precomputed data structure provides a speedup of multiple orders of magnitude. For example, for a time window size of $2^{18}$ and $\alpha=0.8$ we spend around 1.6 ms per $\alpha$-edge with the naive approach whereas the query data structure only takes 1.7 \textmu s per $\alpha$-edge.

			\begin{table*}[h]
					\caption{Comparing the performance of computing $\alpha$-shapes of requested time windows in the \textit{swarm} data set naively to that of querying a query data structure. We give for various $\alpha$-values and time window sizes $len$ the time to compute the Delaunay triangulation and its size, and the number of $\alpha$-edges for the requested time window and $\alpha$-value, and the time to query these $\alpha$-edges from a precomputed query data structure (see also Section~\ref{subsec:resultsQuery}) which maintains the temporal $\alpha$-shape of the whole point set (480,000 points). Query times are averaged over 10 runs. We include convex hull facets in the Delaunay triangulation, so the size is exactly $2 \cdot len - 2$.} \label{tab:ricTimings}
					\centering
					\tabcolsep=0.058cm
					\begin{tabular}{c c c || c c c c || r r r r r r}
						\multicolumn{3}{c||}{\textbf{Delaunay triangulation}} & \multicolumn{4}{c||}{\textbf{$\boldsymbol{\alpha}$-edges in $\boldsymbol{\alpha}$-shape}} & \multicolumn{6}{c}{\textbf{Query time after preprocessing}}\\
						$len$ & time & triangles & $\alpha=0.2$ & $\alpha=0.8$ & $\alpha=2.0$ & $\alpha=5.0$ & ~ & $\alpha=0.2$ & $\alpha=0.8$ & $\alpha=2.0$ & $\alpha=5.0$ \\
						\hline \rule{0pt}{0.9\normalbaselineskip}%
						$2^{12}$ & 16ms & 8,190 & 3,632 & 250 & 130 & 93 & ~ & 1,417\textmu s & 573\textmu s & 384\textmu s & 295\textmu s\\
						$2^{14}$ & 76ms & 32,766 & 3,187 & 291 & 190 & 129 & ~ & 1,315\textmu s & 728\textmu s & 564\textmu s & 482\textmu s \\
						$2^{16}$ & 370ms & 131,070 & 3,092 & 500 & 328 & 176 & ~ & 1,073\textmu s & 741\textmu s & 646\textmu s & 559\textmu s \\
						$2^{18}$ & 2,068ms & 524,286 & 3,962 & 1,292 & 752 & 392 & ~ & 2,593\textmu s & 2,193\textmu s & 2,131\textmu s & 1,420\textmu s \\
					\end{tabular}
			\end{table*}

		\subsection{Temporal \texorpdfstring{$\boldsymbol{\alpha}$}{α}-Shape Computation}
			We benchmark the computation of the temporal $\alpha$-shape on our data sets, split into the Delaunay triangulation phase and the $\alpha$-shape phase. Table~\ref{tab:alphaComputation} shows our results. Runtime measurements are the average of three runs.
			\begin{table*}[h]
					\caption{Computation time, memory usage and output size of the temporal Delaunay enumeration and $\alpha$-shape.} \label{tab:alphaComputation}
					\centering
					\begin{tabular}{c | c c c || c c c }
						~ & \multicolumn{3}{c||}{\textbf{Temporal Delaunay enumeration}} & \multicolumn{3}{c}{\textbf{Temporal $\boldsymbol{\alpha}$-shape}}\\
						data set/system & time & $|T|$ & memory & time & $|\alpha_T|$ & memory \\
						\hline
						\textit{storm/system1} & 10.8s & 6,460,098 & 7.9 GB & 9.9s & 18,231,685 & 3.1 GB \\
						\textit{swarm/system2} & 147.3s & 45,494,688 & 90.4 GB & 222.1s & 127,227,279 & 28.4 GB \\
					\end{tabular}
			\end{table*}

			We compare our approach to compute the temporal $\alpha$-shape against that of~\cite{Haunert19}, which was benchmarked using the \textit{storm} data set on an Intel\textregistered{} Core i7-7700T CPU. Measurements from~\cite{Haunert19} are not given in detail, so we have made the following assumptions in order to compare the performance of the two approaches: Construction time of the query data structure in~\cite{Haunert19} is not given separately from the $\alpha$-edge computation time (\cite{Haunert19}, p. 10). Using the total construction time (``less than two minutes'') for a smaller $\alpha$-value, which produces a larger query data structure than for the larger $\alpha$-value, as an upper bound on the query data structure construction time, we subtract ``less than two minutes'' from ``about 20 minutes'' to obtain a lower bound on the $\alpha$-edge computation time. The query time of 120 ms (\cite{Haunert19},~p. 9) is only given for $\alpha=200$, though it is unlikely that the query time would decrease significantly for $\alpha=500$ given that the resulting query data structure has similar size.

			Table~\ref{tab:previousCompare} compares the construction and query performance of~\cite{Haunert19} to that of our approach for the \textit{storm} data set. Our approach gives a speedup in construction time of at least $\sim$52 over~\cite{Haunert19}, and likely much more for larger $\alpha$-values, all while considering not just a single value of $\alpha$, but all values at once. Query performance also is multiple orders of magnitude better.
			
			\begin{table}[h]
					\caption{Comparison against existing approach~\cite{Haunert19} which requires a fixed $\alpha$-value (here: $\alpha = 500$) in advance.} \label{tab:previousCompare}
					\centering
					\begin{tabular}{c | c c }
						~ & \cite{Haunert19} & This paper \\
						\hline
						Finding $\alpha$-edges & about 18 minutes &  10.8s + 9.9s \\
						Query time & $\sim$120ms & $<$250\textmu s  \\
					\end{tabular}
			\end{table}

		\subsection{Query Data Structure}\label{subsec:resultsQuery}
			We benchmark our implementation of the cs-box-tree behind the demo application in Section~\ref{sec:demo}, which is implemented in C++ to allow for better memory management, with the \textit{swarm} data set. It takes 254 seconds to set up the cs-box-tree with the 127 million cuboids of the temporal $\alpha$-shape on \textit{system1}, which requires 12.8 GB of memory.
	
			Table~\ref{tab:ricTimings}, right, shows how query time tends to grow with increasing query result set size and with increasing time window size. We execute 10,000 queries with randomly selected lower and upper time window endpoints for $\alpha \in \{0.2, 0.8, 2.0, 5.0\}$ and report statistics on query result set size and query times in Table~\ref{tab:csboxtreeStats}. Query times are in the range of 0 - 3~ms, even for small $\alpha$-values. Most queries take 0.4 - 2 \textmu s per cuboid. It is clear that a precomputed query data structure provides significant speedups over computing $\alpha$-shapes on demand, especially when the resulting $\alpha$-shape representation is small.
			
			\begin{table*}[h]
					\caption{Average, min, max and standard deviation of query times and query result set sizes of 10,000 queries.} \label{tab:csboxtreeStats}
					\centering
					\begin{tabular}{c | c c c c || r r r r}
						~ & \multicolumn{4}{c||}{\textbf{Query result set size}} & \multicolumn{4}{c}{\textbf{Query time}}\\
						~ & $\alpha=0.2$ & $\alpha=0.8$ & $\alpha=2.0$ & $\alpha=5.0$ & $\alpha=0.2$ & $\alpha=0.8$ & $\alpha=2.0$ & $\alpha=5.0$ \\
						\hline \rule{0pt}{0.9\normalbaselineskip}%
						min & 2 & 10 & 6 & 6 & 34\textmu s & 49\textmu s & 56\textmu s & 39\textmu s\\
						avg & 1,847 & 846 & 570 & 289 & 778\textmu s & 664\textmu s & 577\textmu s & 473\textmu s \\
						max & 4,868 & 1,913 & 1,215 & 559 & 2,958\textmu s & 1,747\textmu s & 1,746\textmu s & 1,353\textmu s \\
						$\sigma$ & 1,170 & 506 & 321 & 143 & 380\textmu s & 282\textmu s & 235\textmu s & 186\textmu s \\
					\end{tabular}
			\end{table*}

		\subsection{Size of the Temporal \texorpdfstring{$\boldsymbol{\alpha}$}{α}-Shape}
			Queries of very short time windows or very small $\alpha$-values may be irrelevant for real-world applications. We investigate how restricting queries by enforcing a minimum time window size and a minimum value of $\alpha$ affects the number of cuboids necessary to correctly answer queries. Note that restricting the maximum window size and maximum value of $\alpha$ would have little effect on the number of necessary cuboids. This is because the cuboids this would exclude come from Delaunay triangles whose circumcircle is either very large, or empty in many time windows, or both. As the number of points increases, this becomes increasingly unlikely.
			
			The \textit{swarm} data set encodes each movement step of the 400 particles with 400 distinct timestamps. Let us only consider time windows aligned with movement steps, i.e.\ time windows of the type $P_{400*i, 400*j-1}$. We restrict queries to longer time windows by requiring $j-i \geq minLen$ and count how many cuboids are necessary for various values of $minLen$ in Table~\ref{tab:restrictedCuboids}, left. Table~\ref{tab:restrictedCuboids}, right, shows how many cuboids we need if we additionally enforce a minimum value of 0.1 for $\alpha$. For reference, the canvas in Figure~\ref{fig:swarmDemo} has a side length of 40 units. This cutoff value was chosen such that lower values would start producing fragmented $\alpha$-shapes. Figure~\ref{fig:minAlpha} shows the level of detail achieved with $\alpha=0.1$. 

			If we only restrict the minimum value of $\alpha$ to 0.1, but allow arbitrary time windows, 79.4\% of cuboids are necessary to answer queries correctly. Unwanted cuboids can easily be discarded after computing the temporal $\alpha$-shape, but modifying the algorithm to avoid spending time on them in the first place is anything but trivial.

			These results indicate that the entire temporal $\alpha$-shape is not much bigger than what one might actually need for real-world applications. In our case, allowing only values of 0.1 or larger for $\alpha$ and only allowing time windows covering at least 64 movement steps still requires 15.3\% of all cuboids.

			\begin{table*}[h]
					\caption{Number and percentage of cuboids necessary to answer queries aligned to movement steps, which span 400 timestamps, and with a minimum time window size of 1, 2, 4, $\dots$, 512 movement steps within \textit{swarm}.} \label{tab:restrictedCuboids} 
					\centering
					\begin{tabular}{c|cc||cc}
						\multicolumn{1}{c|}{~} & \multicolumn{2}{c||}{\textbf{No restriction on $\boldsymbol{\alpha}$}} & \multicolumn{2}{c}{$\boldsymbol{\alpha \geq 0.1}$}\\
						minimum movement steps & cuboids & \hspace{-.14em}\% of total cuboids & cuboids & \hspace{-.14em}\% of total cuboids \\
						\hline \rule{0pt}{0.9\normalbaselineskip}%
						1 & 63,321,387 & 49.77\% & 39,341,745 & 30.92\% \\
						2 & 56,385,797 & 44.32\% & 34,390,733 & 27.03\% \\
						4 & 49,494,282 & 38.90\% & 30,238,587 & 23.77\% \\
						8 & 43,987,644 & 34.57\% & 27,242,646 & 21.41\% \\
						16 & 40,037,206 & 31.47\% & 24,826,379 & 19.51\% \\
						32 & 36,599,040 & 28.77\% & 22,270,628 & 17.50\% \\
						64 & 33,197,685 & 26.09\% & 19,464,186 & 15.30\% \\
						128 & 27,928,269 & 21.95\% & 15,045,754 & 11.83\% \\
						256 & 22,108,154 & 17.38\% & 11,219,232 & 8.82\% \\
						512 & 14,968,658 & 11.77\% & 6,414,891 & 5.04\% \\
					\end{tabular}
			\end{table*}

			\begin{figure}[h]
				\centering
				\includegraphics[width=.4\textwidth]{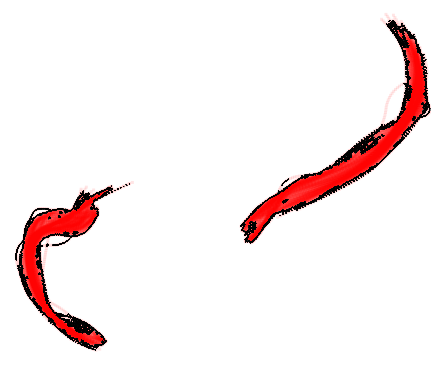}%
				\caption{Screenshot of the interactive visualization with $\alpha=0.1$, $\alpha$-shape in black.}\label{fig:minAlpha}
			\end{figure}

	\section{Conclusion}\label{sec:outlook}
		We gave an algorithm to compute the temporal $\alpha$-shape, which is a description of all $\alpha$-shapes over all time windows and over all values of $\alpha$, in output-sensitive linear time in arbitrary fixed dimension. Our approach is based on an existing framework to compute all Delaunay triangles over all time windows. A demo application verified the practicality of our approach and experimental results suggest that precomputing all $\alpha$-shapes, even those of atypical time windows and values of $\alpha$, does not cause excessive overhead in the output size. Experiments also showed that our approach achieves a speedup of at least $\sim$52 over an existing, less general approach.
		
		While our algorithm takes time linear in the size of the temporal $\alpha$-shape $\alpha_T$, we do not yet have practical bounds on $|\alpha_T|$. Further research into this direction is needed. We hope to see our work utilized for further real-world applications.

	\bibliography{DT}

@inproceedings{aganj2007spatio,
	title={Spatio-temporal shape from silhouette using four-dimensional {Delaunay} meshing},
	author={Aganj, Ehsan and Pons, Jean-Philippe and S{\'e}gonne, Florent and Keriven, Renaud},
	booktitle={2007 IEEE 11th International Conference on Computer Vision},
	pages={1--8},
	year={2007},
	organization={IEEE},
	doi={10.1109/ICCV.2007.4409016}}

@article{Agarwal2002,
	author = {Pankaj Agarwal and Mark de Berg and Joachim Gudmundsson and Mikael Hammar and Herman Haverkort},
	year = {2001},
	pages = {291-312},
	title = {Box-Trees and R-trees with Near-Optimal Query Time},
	volume = {28},
	journal = {Discrete \& Computational Geometry},
	doi = {10.1007/s00454-002-2817-1}
}

@inproceedings{Haunert19,
	author    = {Annika Bonerath and
	Benjamin Niedermann and
	Jan{-}Henrik Haunert},
	title     = {Retrieving {\(\alpha\)}-Shapes and Schematic Polygonal Approximations
	for Sets of Points within Queried Temporal Ranges},
	booktitle = {Proceedings of the 27th ACM SIGSPATIAL International Conference on Advances in Geographic Information Systems},
	pages     = {249--258},
	publisher = {{ACM}},
	year      = {2019},
	doi       = {10.1145/3347146.3359087}
}

@article{EKS83,
	author    = {Herbert Edelsbrunner and
	David G. Kirkpatrick and
	Raimund Seidel},
	title     = {On the shape of a set of points in the plane},
	journal   = {IEEE Transactions on Information Theory},
	volume    = {29},
	number    = {4},
	pages     = {551--559},
	year      = {1983},
	doi       = {10.1109/TIT.1983.1056714},
}

@inproceedings{funke2020efficiently,
	title={Efficiently Computing All {Delaunay} Triangles Occurring over All Contiguous Subsequences},
	author={Funke, Stefan and Weitbrecht, Felix},
	booktitle={31st International Symposium on Algorithms and Computation (ISAAC 2020)},
	year={2020},
	organization={Schloss Dagstuhl-Leibniz-Zentrum f{\"u}r Informatik},
	doi = {10.4230/LIPIcs.ISAAC.2020.28}
}

@Article{Guibas1992,
	author="Guibas, Leonidas J.
	and Knuth, Donald E.
	and Sharir, Micha",
	title="Randomized incremental construction of {Delaunay} and {Voronoi} diagrams",
	journal="Algorithmica",
	year="1992",
	day="01",
	volume="7",
	number="1",
	pages="381--413",
	abstract="In this paper we give a new randomized incremental algorithm for the construction of planar Voronoi diagrams and Delaunay triangulations. The new algorithm is more ``on-line'' than earlier similar methods, takes expected timeO(nℝgn) and spaceO(n), and is eminently practical to implement. The analysis of the algorithm is also interesting in its own right and can serve as a model for many similar questions in both two and three dimensions. Finally we demonstrate how this approach for constructing Voronoi diagrams obviates the need for building a separate point-location structure for nearest-neighbor queries.",
	issn="1432-0541",
	doi       = {10.1007/BF01758770},
}

@inproceedings{sussmuth2008reconstructing,
	title={Reconstructing animated meshes from time-varying point clouds},
	author={S{\"u}{\ss}muth, Jochen and Winter, Marco and Greiner, G{\"u}nther},
	booktitle={Computer Graphics Forum},
	volume={27},
	number={5},
	pages={1469--1476},
	year={2008},
	organization={Wiley Online Library},
	doi = {10.1111/j.1467-8659.2008.01287.x}
}

@misc{weitbrecht2023github,
	author = {Felix Weitbrecht},
	title = {{DelaunayEnumerator, a GitHub repository}},
	year = {2023},
	howpublished = {\url{https://github.com/felixweitbrecht/DelaunayEnumerator}}
}

@article{vauhkonen2009identification,
  title={Identification of Scandinavian commercial species of individual trees from airborne laser scanning data using alpha shape metrics},
  author={Vauhkonen, Jari and Tokola, Timo and Packal{\'e}n, Petteri and Maltamo, Matti},
  journal={Forest Science},
  volume={55},
  number={1},
  pages={37--47},
  year={2009},
  publisher={Oxford University Press},
  doi={10.1093/forestscience/55.1.37}
}

@incollection{edelsbrunner2011alpha,
  title={Alpha shapes-a survey},
  author={Edelsbrunner, Herbert},
  booktitle={Tessellations in the Sciences: Virtues, Techniques and Applications of Geometric Tilings},
  year={2011}
}

@article{edelsbrunner1994three,
  title={Three-dimensional alpha shapes},
  author={Edelsbrunner, Herbert and M{\"u}cke, Ernst P},
  journal={ACM Transactions on Graphics (TOG)},
  volume={13},
  number={1},
  pages={43--72},
  year={1994},
  publisher={ACM New York, NY, USA},
  doi={10.1145/174462.156635}
}

@inproceedings{weitbrecht2022linear,
	title={Linear Time Point Location in {Delaunay} Simplex Enumeration over all Contiguous Subsequences},
	author={Weitbrecht, Felix},
	booktitle={EuroCG},
	pages={399--404},
	year={2022},
	url={http://eurocg2022.unipg.it/booklet/EuroCG2022-Booklet.pdf}
}

@ARTICLE{8440823,
  author={Buchmüller, Juri and Jäckle, Dominik and Cakmak, Eren and Brandes, Ulrik and Keim, Daniel A.},
  journal={IEEE Transactions on Visualization and Computer Graphics}, 
  title={MotionRugs: Visualizing Collective Trends in Space and Time}, 
  year={2019},
  volume={25},
  number={1},
  pages={76-86},
  doi={10.1109/TVCG.2018.2865049}
 }

@misc{stormDataSet,
	author = {{National Climatic Data Center, NESDIS}},
	organization = {{National Centers for Environmental Information, NESDIS, NOAA, U.S. Department of Commerce}},
	title = {Storm Events Data},
	year = {1950 -- 2022},
	url = {https://www.ncei.noaa.gov/access/metadata/landing-page/bin/iso?id=gov.noaa.ncdc:C00510}
}

@article{chan2022orthogonal,
  title={Orthogonal point location and rectangle stabbing queries in 3-d},
  author={Chan, Timothy and Nekrich, Yakov and Rahul, Saladi and Tsakalidis, Konstantinos},
  journal={Journal of Computational Geometry},
  volume={13},
  number={1},
  pages={399--428},
  year={2022},
  doi={10.20382/jocg.v13i1a15}
}

@inproceedings{weitbrecht2023complexityeurocg,
	title={On the number of {Delaunay} Simplices over all Time Window in any Dimension},
	author={Weitbrecht, Felix},
	booktitle={EuroCG},
	pages={39--45},
	year={2023},
	url={https://dccg.upc.edu/eurocg23/wp-content/uploads/2023/05/Booklet_EuroCG2023.pdf}
}

@phdthesis{weitbrecht2026delaunay,
  title={Time Windowed Data Structures},
  author={Weitbrecht, Felix},
  year={2026},
  school={Universit\"{a}t Stuttgart},
  doi= {10.18419/opus-18190}
}
	
\end{document}